\DeclarePairedDelimiter\floor{\lfloor}{\rfloor}
\newtheorem{thm}{Theorem}[section]
\newtheorem{lem}[thm]{Lemma}
\newtheorem{prop}[thm]{Proposition}
\newtheorem{cor}[thm]{Corollary}
\newtheorem{conj}{Conjecture}[section]
\def\BibTeX{{\rm B\kern-.05em{\sc i\kern-.025em b}\kern-.08em
		T\kern-.1667em\lower.7ex\hbox{E}\kern-.125emX}}
\def\v{\mathbf{v}}
\def\0{\mathbf{0}}
\def\E{\mathbb{E}}
\def\P{\mathbb{P}}
\def\R{\mathbb{R}}
\def\Z{\mathbb{Z}}
\def\cQ{\mathcal{Q}}
\def\cR{\mathcal{R}}
\def\G{\Gamma}
\begin{document}
	\title{Probabilistic Forwarding of Coded Packets on Networks} 
	
	\author{B.~R.~Vinay Kumar,~\IEEEmembership{Student Member,~IEEE},
	\ and \
		Navin Kashyap,~\IEEEmembership{Senior Member,~IEEE}
		\thanks{
		This work was presented in part at the 2019 IEEE International Symposium on Information Theory (ISIT 2019) held in Paris, France.} 
		\thanks{The authors are with the Department of Electrical Communication Engineering, 
			Indian Institute of Science, Bengaluru, India.
			Email: \{vinaykb, nkashyap\}@iisc.ac.in}}
	\maketitle

	\begin{abstract}
We consider a scenario of broadcasting information over a network of nodes connected by noiseless communication links. A source node in the network has some data packets to broadcast. It encodes these data packets into $n$ coded packets in such a way that any node in the network that receives any $k$ out of the $n$ coded packets will be able to retrieve all the original data packets. The source transmits the $n$ coded packets to its one-hop neighbours. Every other node in the network follows a probabilistic forwarding protocol, in which it forwards a previously unreceived packet to all its neighbours with a certain probability $p$. 
We say that the information from the source undergoes a ``near-broadcast'' if the expected fraction of nodes that receive at least $k$ of the $n$ coded packets is close to $1$. The forwarding probability $p$ is chosen so as to minimize the expected total number of transmissions needed for a near-broadcast. We study how, for a given $k$, this minimum forwarding probability and the associated expected total number of packet transmissions varies with $n$. We specifically analyze the probabilistic forwarding of coded packets on two network topologies: binary trees and square grids. For trees, our analysis shows that for fixed $k$, the expected total number of transmissions increases with $n$. On the other hand, on grids, a judicious choice of $n$ significantly reduces the expected total number of transmissions needed for a near-broadcast. Behaviour similar to that of the grid is also observed in other well-connected network topologies such as random geometric graphs and random regular graphs.
	\end{abstract} 
	
	\section{Motivation and related work}\label{intro}
	An ad-hoc network is a network of nodes which communicate with each other without relying on any centralized infrastructure.  A classical example of ad-hoc networks is wireless sensor networks (WSNs) which have sensors measuring temperature, humidity etc. connected with each other. The Internet of Things (IoT) network, which involves different types of physical devices --- sensors, actuators, routers, mobiles etc. ---  communicating with each other over a network can be thought of as an ad-hoc network.
	\par Broadcast mechanisms on such distributed networks are crucial in order to disburse key network-related information throughout the network. In the applications mentioned above, updation of sensing parameters in WSNs or over-the-air programming of the IoT nodes are done typically through a broadcast mechanism. These broadcasts are usually initiated from a single node in the network which is easily accessible (a mobile phone, say). In this paper, we will assume that there is a source node, $s$, which has $k_s$ packets of information which need to be broadcast in the network. A natural broadcast algorithm is flooding, wherein a node forwards every newly received packet to all its one-hop neighbours. If there are $N$ nodes in the network, then the total number of transmissions is $k_sN$. However, a node might receive the same packet from multiple neighbours resulting in wasteful transmissions. Moreover, flooding is also known to result in the `broadcast-storm' problem \cite{tseng2002broadcast}. In short, although the flooding mechanism is simple and easy to implement, there is an excessive number of transmissions in the network, resulting in a high energy expenditure. 
	\par For the applications that we are interested in, such a broadcast algorithm is not feasible since individual nodes are energy-constrained. Additionally, each node in the network has minimal computational ability and limited knowledge of the network topology. To adhere to these limitations, any broadcast algorithm that is proposed needs to be completely distributed, must minimize energy consumption, should run in finite time, and must impose minimal computational burden on the individual nodes. 
	\par Probabilistic forwarding as a broadcast mechanism, has been proposed in the literature (see  \cite{sasson2003probabilistic}) as an alternative to flooding. Here, each node, on receiving a packet for the first time, either forwards it to all its one-hop neighbours with probability $p$ or takes no action with probability $1-p$. Probabilistic forwarding has also been referred to as a gossip algorithm in \cite{haas2006gossip}, in which the authors claim a $35\%$ reduction in the transmission overhead as compared to flooding. An upper bound on the expected number of transmissions for this algorithm can be obtained thus. On a network of $N$ nodes, an average of $Np$ nodes decide to transmit a given source packet, irrespective of whether they receive it or not. Since there are $k_s$ source packets in all, there are $k_sNp$ expected total number of transmissions. For a forwarding probability $p<1$, this is less than the number of transmissions for flooding. Nevertheless, a drawback of probabilistic forwarding is that a particular node in the network may not receive one of the $k_s$ packets, and hence, is unable to obtain the information from the source. 
	\par In order to overcome this, we introduce coded packets along with probabilistic forwarding. We describe the setup here. Consider a large network with a particular node designated as the source $s$. The source has $k_s$ message packets to send to a large fraction of nodes in the network. The $k_s$ message packets are first encoded into $n$ coded packets such that, for some $k \ge k_s$, the reception of any $k$ out of the $n$ coded packets by a node suffices to retrieve the original $k_s$ message packets. Examples of codes with this property are Maximum Distance Separable (MDS) codes ($k=k_s$), fountain codes ($k=k_s(1+\epsilon)$ for some $\epsilon >0$) etc. which are used in practice. The $n$ coded packets are indexed using integers from $1$ to $n$, and the source transmits each packet to all its one-hop neighbours. All the other nodes in the network use the probabilistic forwarding mechanism: when a packet (say, packet $ \# j$) is received by a node for the first time, it either transmits it to all its one-hop neighbours with probability $p$ or does nothing with probability $1-p.$ The node ignores all subsequent receptions of packet $\# j$. Packet collisions and interference effects are neglected.
	
	Our goal is to analyze the performance of the above algorithm. In particular, we wish to find the minimum retransmission probability $p$ for which the expected fraction of nodes receiving at least $k$ out of the $n$ coded packets is close to 1, which we deem a ``near-broadcast''. 
This probability yields the minimum value for the expected total number of transmissions across all the network nodes needed for a near-broadcast. The expected total number of transmissions is taken to be a measure of the energy expenditure in the network. 

\begin{figure}
	\includegraphics[width=\linewidth]{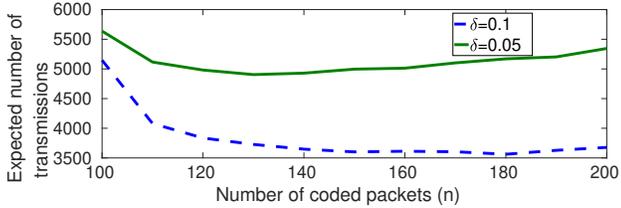}
	\caption{Expected total number of transmissions on a RGG with $60$ nodes in a $20\times 20$ square with two nodes being connected if they are at most $r=5.5$ apart. The forwarding probability $p$ is such that the expected fraction of nodes that receive at least $k=100$ of the $n$ coded packets is at least $1-\delta$.}
	\label{fig:rgg}
	\vspace{-1.5em}
\end{figure}
	
	Simulation results presented in \cite{ncc2018:probfwding} indicate that over a wide range of network topologies (including the important case of random geometric graphs (RGGs), but not including tree-like topologies),  the expected total number of transmissions initially decreases to a minimum and then gradually increases with $n$. A representative simulation result on a RGG is provided in Fig. \ref{fig:rgg}. Our aim is to understand this behaviour and predict, via analysis, the value of $n$ that minimizes the expected number of transmissions. We would ultimately like to explain this behaviour on random geometric graphs, which constitute an important model for wireless ad-hoc networks \cite{vaze2015random}. However, we have not yet developed the tools required for the analysis there. 
	\par Our work is closest in spirit to the works in \cite{haas2006gossip} and \cite{Shen2006dirbroad}. In \cite{haas2006gossip}, the authors describe variants of the GOSSIP protocol and provide heuristics and simulation results for improving flooding and routing mechanisms in networks. The authors in \cite{Shen2006dirbroad} map the broadcast mechanism to percolation on networks which is the approach that we will be using in this paper when the underlying network has a grid topology. The authors in \cite{Shen2006dirbroad} further use directional antennas to reduce the transmission overhead. To the best of our knowledge, none of the previous works have looked at gossip algorithms when there are coded packets. Our contributions in this paper comprise of a detailed analysis of the probabilistic forwarding mechanism with coded packets when the underlying network topologies are binary trees and grids. The analysis on the tree is straightforward and uses concentration bounds on binomial random variables. The case of the grid is far more interesting with the arguments involving ideas from ergodic theory and the site percolation process on $\mathbb{Z}^2$. The mapping between connectivity in finite networks and the site percolation process has been discussed briefly in \cite[Chapter 3]{franceschetti2008random}. Our approach builds on this mapping to obtain estimates of the minimum forwarding probability and the expected total number of transmissions for the grid.
	\par The rest of the paper is organized as follows. Section \ref{sec:formulation} contains a mathematical formulation of the problem. In Section \ref{sec:init_obs}, some initial observations are made on the minimum forwarding probability for a near-broadcast and the expected total number of transmissions at this probability. Some simulation results are provided to support these observations as well. In Section \ref{sec:tree} we consider the problem on rooted binary trees and derive expressions for the minimum forwarding probability and the expected total number of transmissions. We show that probabilistic forwarding using coded packets is not beneficial on trees. In Section \ref{sec:grid}, we provide estimates for the expected number of transmissions on the grid. Ergodic theory and the theory of site percolation are used to obtain these estimates. Section \ref{sec:discussion} discusses some critical aspects of the analysis and the behaviour of the probabilistic forwarding on other graph topologies. It also briefly describes certain extensions of the algorithm that are possible. The appendix contains some auxiliary results needed for our analysis.

\section{Problem Setting}\label{sec:formulation}
Consider a graph $G=(V,E)$, where $V$ is the vertex set with $N$ vertices (nodes) and $E$ is the set of edges (noiseless communication links). It is assumed that when a node broadcasts a packet, all its one-hop neighbours receive the packet without any errors. A source node $s \in V$ has a certain number of message packets which need to be broadcast in the network. The source $s$ encodes these messages into $n$ coded packets in such a way that a node that receives any $k$ of the $n$ coded packets can retrieve all the original message packets. It is assumed that each packet has a header which identifies the packet index $j \in [n] := \{1,2,\ldots,n\}$.  The source node broadcasts all $n$ coded packets to its one-hop neighbours, after which the probabilistic forwarding protocol takes over. A node receiving a particular packet for the first time, forwards it to all its one-hop neighbours with probability $p$ and takes no action with probability $1-p$. Each packet is forwarded independently of other packets and other nodes. This probabilistic forwarding continues until there are no further transmissions in the system. The protocol indeed must terminate after finitely many transmissions since each node in the network may choose to forward a particular coded packet only the first time it is received. The node ignores all subsequent receptions of the same packet, irrespective of the decision it took at the time of first reception.

We are interested in the following scenario. Let $\mathcal{R}_{k,n}$ be the nodes, including the source node, that receive at least $k$ out of the $n$ coded packets. We call these \textit{successful receivers} and denote the number of such nodes by $R_{k,n}$. Given a $\delta \in (0,1)$, let\footnote{The quantities $R_{k,n}$, $p_{k,n,\delta}$, $\tau_{k,n,\delta}$ etc.\ are all, of course, functions of the underlying graph $G$ as well, but for simplicity, we usually suppress this dependence from our notation. We use $R_{k,n}(G)$, $p_{k,n,\delta}(G)$, $\tau_{k,n,\delta}(G)$ etc.\ whenever the dependence on $G$ needs to be made explicit.} $p_{k,n,\delta}$ be the minimum forwarding probability $p$ for a near-broadcast, i.e.,
	\begin{equation}
	p_{k,n,\delta} \ := \ \inf\left\{p\  \bigg|\ \mathbb{E}\left[\frac{R_{k,n}}{N}\right] \geq 1-\delta \right\}.
	\label{def:pkndelta}
	\end{equation} 

The performance measure of interest, denoted by $\tau_{k,n,\delta}$, is the expected total number of transmissions across all nodes when the forwarding probability is set to $p_{k,n,\delta}$. Here, it should be clarified that whenever a node forwards (broadcasts) a packet to all its one-hop neighbours, it is counted as a single (simulcast) transmission. Our aim is to determine, for a given $k$ and $\delta$, how $\tau_{k,n,\delta}$ varies with $n$, and the value of $n$ at which it is minimized (if it is indeed minimized). To this end, it is necessary to first understand the behaviour of $p_{k,n,\delta}$ as a function of $n$. In the next section, we make some initial observations for the minimum retransmission probability, $p_{k,n,\delta}$, and the corresponding value of the expected total number of transmissions, $\tau_{k,n,\delta}$, valid for any underlying network topology.

\section{Initial observations}\label{sec:init_obs}
On any connected graph $G=(V,E)$, when a successful receiver must receive $k$ out of $n'$ coded packets, instead of $k$ out of $n$, where $n'>n$, each packet can be transmitted at a lower probability while still ensuring a near-broadcast. In fact, the minimum forwarding probability goes to $0$ as $n$ is increased. This is formalized in the following lemma.
\begin{lem}\label{lem:pkndelzero}
	For fixed values of $k$ and $\delta$, 
	\begin{enumerate}
		\item[\emph{(a)}] $p_{k,n,\delta}$ is a non-increasing function of n.
		\item[\emph{(b)}] $p_{k,n,\delta} \rightarrow 0$ as $n\rightarrow \infty.$
	\end{enumerate}
\end{lem}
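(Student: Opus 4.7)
The plan is to prove the two parts separately: part (a) by a monotone coupling in $n$, and part (b) by a binomial law-of-large-numbers argument after fixing $p$ and letting $n \to \infty$.

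For (a), I would realise both the $n$-packet and the $(n+1)$-packet protocols on a common probability space by associating to every pair (node $v$, packet index $j \in [n+1]$) an independent Bernoulli$(p)$ variable $\xi_{v,j}$ that dictates whether $v$ would forward packet $j$ upon its first reception. Ignoring the variables with $j = n+1$ reproduces the $n$-packet protocol with the right law, while using all of them reproduces the $(n+1)$-packet protocol. Under this coupling, the propagation histories of packets $1,\ldots,n$ coincide in the two systems, so any node that receives at least $k$ of the first $n$ packets automatically receives at least $k$ of the $n+1$ packets. Hence $R_{k,n} \le R_{k,n+1}$ pointwise, and taking expectations at every $p \in [0,1]$ gives $\mathbb{E}[R_{k,n}/N] \le \mathbb{E}[R_{k,n+1}/N]$. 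In particular, the feasible set in the definition of $p_{k,n+1,\delta}$ contains that of $p_{k,n,\delta}$, so its infimum is no larger, proving monotonicity.

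For (b), the plan is to show that for each fixed $p > 0$, $\mathbb{E}[R_{k,n}/N] \to 1$ as $n \to \infty$, which implies $p_{k,n,\delta} \le p$ for all sufficiently large $n$; letting $p \downarrow 0$ then yields the claim. To establish the convergence, fix a vertex $v$ and let $X_{v,j} := \mathbf{1}\{v \text{ receives packet } j\}$. Because forwarding of distinct packets uses disjoint collections of independent coin flips, the $X_{v,1},\ldots,X_{v,n}$ are i.i.d.\ with common success probability $a_v(p) := \Pr(X_{v,j} = 1)$. The key structural observation is that $a_v(p) > 0$: on a connected $G$, one may lower-bound $a_v(p)$ by $p^{d(s,v)}$ via the event that every intermediate node on a fixed shortest $s$-to-$v$ path cooperates. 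Consequently the number of packets received at $v$ is Binomial$(n, a_v(p))$, and its probability of being at least $k$ tends to $1$ by the weak law of large numbers (or a Chernoff bound). Averaging over the $N$ vertices of the finite graph, $\mathbb{E}[R_{k,n}/N] \to 1$, as required.

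There is no serious analytic difficulty; both arguments are essentially bookkeeping. For (a), the coupling must be indexed by (node, packet) pairs, rather than by time, so that the two systems agree on packets $1,\ldots,n$. For (b), one must recognise that the $n$ packet propagations are mutually independent — which is explicit in the protocol — so that the count of packets received at $v$ is genuinely binomial and accessible to LLN-type estimates. The positivity of $a_v(p)$, which uses connectedness of $G$, is the only structural input needed beyond that.
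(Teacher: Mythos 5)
Your proof is correct. Part (a) is essentially the paper's own argument: the same coupling on a common probability space in which the first $n$ packets' propagation histories coincide in the $n$- and $(n+1)$-packet systems, giving $R_{k,n}\le R_{k,n+1}$ pointwise and hence containment of the feasible sets in the definition of $p_{k,n,\delta}$. Part (b), however, takes a genuinely different route. The paper partitions the $n$ packets into $\lfloor n/k\rfloor$ disjoint groups of $k$, lets $A_i$ be the event that the $i$th group is received in full by at least $(1-\delta/2)N$ nodes, and uses the mutual independence of the $A_i$ together with $\P(A_1)>0$ to conclude that some $A_i$ occurs with probability at least $1-\delta/2$ for large $n$; a Markov-type step then gives $\E[R_{k,n}]/N\ge(1-\delta/2)^2\ge 1-\delta$. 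You instead work vertex by vertex: for each $v$ the indicators $X_{v,1},\dots,X_{v,n}$ of receiving the individual packets are i.i.d.\ with success probability $a_v(p)\ge p^{d(s,v)}>0$ (by forcing a fixed shortest path to cooperate, using connectedness), so the packet count at $v$ is $\mathrm{Bin}(n,a_v(p))$ and $\P(\text{at least }k)\to 1$; summing over the finitely many vertices gives $\E[R_{k,n}/N]\to 1$. Your argument is arguably more elementary (a plain binomial LLN rather than a grouping-plus-Markov device) and yields the slightly stronger conclusion that the expected fraction of successful receivers converges to $1$ for every fixed $p>0$, not merely that it eventually exceeds $1-\delta$; the paper's grouping argument, on the other hand, needs only that $\P(A_1)>0$ and so avoids any per-vertex path construction. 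Both are complete proofs of the lemma as stated.
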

\begin{IEEEproof}\label{sec:lem1proof}
(a)\ For any $n > 0$, the random variables $R_{k,n}$ and $R_{k,n-1}$ can be coupled as follows: If there are a total of $n$ coded packets, then $R_{k,n-1}$ (resp.\ $R_{k,n}$) is realized as the number of nodes, including the source node, that receive at least $k$ of the \emph{first} $n-1$ (resp.\ at least $k$ of the $n$) coded packets. It is then clear that $\E[\frac{1}{N} R_{k,n}] \ge \E[\frac{1}{N} R_{k,n-1}]$, and hence, by \eqref{def:pkndelta}, we have $p_{k,n,\delta} \le p_{k,n-1,\delta}$.

(b)\  From the $n$ coded packets, create $\lfloor\frac{n}{k}\rfloor$ non-overlapping (i.e., disjoint)  groups of $k$ packets each. For $i=1,2,\cdots,\floor{\frac{n}{k}}$, let $A_i$ be the event that the $i$th group of $k$ coded packets is received by at least $(1-\delta/2)N$ nodes. The events $A_i$ are mutually independent and have the same probability of occurrence. For any $p>0$, we have $\P(A_i)$ being strictly positive (but perhaps small). Hence, $$\P(\text{at least one } A_i \text{ occurs}) = 1-\bigl(1-\P(A_1)\bigr)^{\lfloor\frac{n}{k}\rfloor} \ge 1-\frac{\delta}{2}$$ for all sufficiently large $n$, so that $\P\left(\frac{R_{k,n}}{N}\ge 1-\delta/2\right)\ge1-\delta/2.$ This further implies that $\frac{\E\left[R_{k,n}\right]}{N}\ge (1-\delta/2)(1-\delta/2) \ge 1-\delta$. Thus, for any $p>0$, we have $p_{k,n,\delta}\le p$ for all sufficiently large $n$.
\end{IEEEproof}

\begin{figure} 
	\centering
	\subfloat[Minimum retransmission probability]{%
		\includegraphics[width=\linewidth]{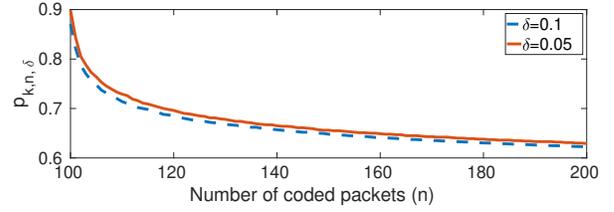}}
	\label{4a}
	\subfloat[Expected total number of transmissions]{%
		\includegraphics[width=\linewidth]{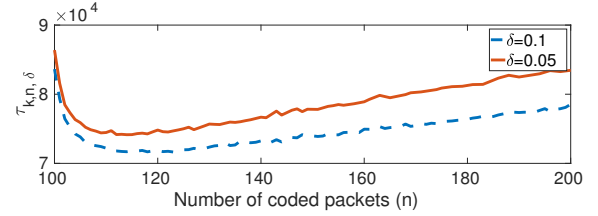}}
	\label{4b}\\
	\caption{Simulation on a $31 \times 31$ grid with k=100 packets}
	\label{fig:grid_simu} 
\end{figure}
Fig. \ref{fig:grid_simu} gives simulation results for the minimum forwarding probability and the expected total number of transmissions on a $31 \times 31$ grid with $k=100$ packets. Notice that the minimum forwarding probability, $p_{k,n,\delta}$, decreases with $n$ as proved above. On the other hand, the expected total number of transmissions, $\tau_{k,n,\delta}$, typically exhibits more complex behaviour. Over a wide range of graph topologies (both deterministic and random), except notably for trees (see Section~\ref{sec:tree}), $\tau_{k,n,\delta}$ initially decreases and then grows gradually as $n$ increases. This trend was seen on a RGG in Fig. \ref{fig:rgg} and is more pronounced for a grid topology --- see Fig. \ref{fig:grid_simu}(b). Thus, there typically is an optimal value of $n$ that minimizes $\tau_{k,n,\delta}$. This means that the ad-hoc network needs to be operated at this value of the number of coded packets $n$ and the corresponding forwarding probability $p_{k,n,\delta}$, in order to have least energy expenditure overall. Notice also that probabilistic forwarding with no coding corresponds to the point $n=k=100$ packets in Fig. \ref{fig:grid_simu}. The number of transmissions $\tau_{k,n,\delta}$ decreases (initially) when coded packets are introduced which highlights the advantage of coding with probabilistic forwarding on such network topologies.
\par The decrease in $\tau_{k,n,\delta}$ happens due to an interplay between two opposing factors: as $n$ increases, $p_{k,n,\delta}$ decreases (Lemma~\ref{lem:pkndelzero}), which contributes towards a decrease in $\tau_{k,n,\delta}$. But this is opposed by the fact that the overall number of transmissions tends to increase when there are more number of packets traversing the network. 
\par To determine the value of $n$ that minimizes $\tau_{k,n,\delta}$, we need more precise estimates of $p_{k,n,\delta}$, and consequently, $\tau_{k,n,\delta}$. For specific graph topologies, we may be able to obtain such estimates using methods tailored to those topologies. We demonstrate this for two topologies in the next two sections, starting with the easiest case of a binary tree.
	
\section{Rooted Binary Trees}\label{sec:tree}
\begin{figure}
	\centerline{\scalebox{0.38}{\input{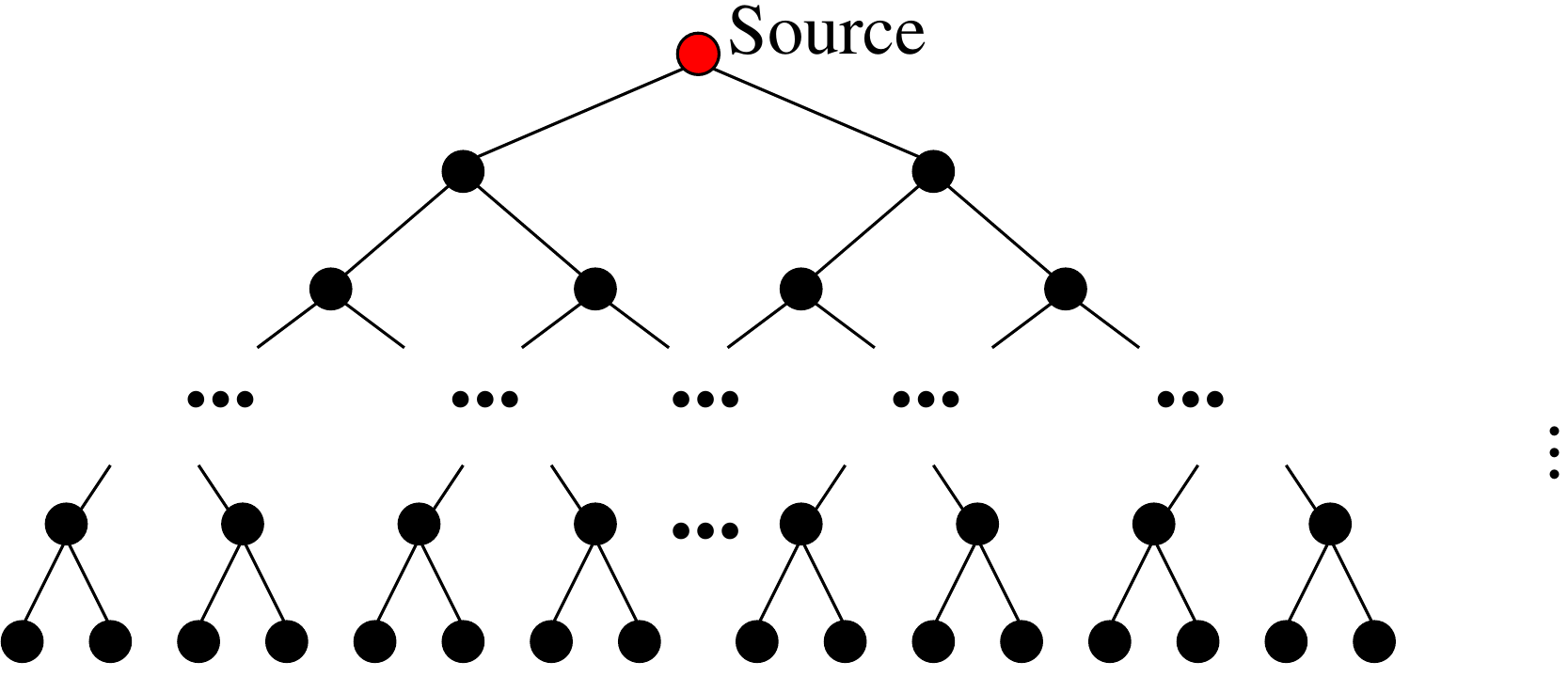_t}}} 		
	\caption{A rooted binary tree of height $H$.}
	\label{fig:bintree}
	\vspace{-1.5em}
\end{figure}
\begin{figure}
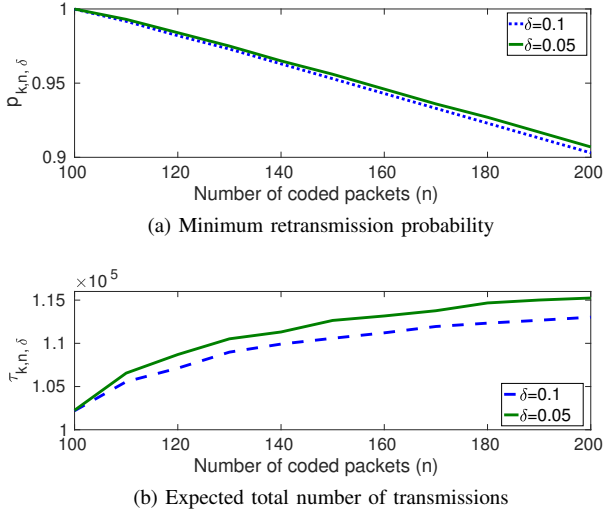
 
	\centering
	\subfloat[Minimum retransmission probability]{%
		\includegraphics[width=\linewidth]{./Images/tree_h10_prob_n}}
	\label{3a}
	\subfloat[Expected total number of transmissions]{%
		\includegraphics[width=\linewidth]{./Images/tree_simu_ht10_trans}}
	\label{3b}\\
	\caption{Probabilistic forwarding on a binary tree of height $H=10$.}
	\label{fig:treesimu} 
	\vspace{-1.5em}
\end{figure}
Consider a rooted binary tree of height $H \ge 2$ as depicted in Fig.~\ref{fig:bintree}. Simulation results from running the probabilistic forwarding protocol on a binary tree of height $H=10$ with $k=100$ packets and $n$ between $100$ and $200$ are shown in Fig. \ref{fig:treesimu}. The minimum forwarding probability decreases as the number of coded packets $n$ is increased, as was proved in Lemma \ref{lem:pkndelzero}. The expected total number of transmissions however, increases monotonically when coded packets are introduced, unlike the trend that is discussed in the previous section. Thus, introducing coded packets along with probabilistic forwarding does not help in reducing the number of transmissions when the underlying network has a tree-like structure. In this section, we analyze the probabilistic forwarding mechanism on the binary tree and show that this is indeed true. The analysis in this section extends easily to the case of rooted $d$-ary trees, for any $d \ge 2$.

In a binary tree of height $H$, there are $2^\ell$ nodes at level $\ell$, for $\ell=0, 1, 2,\cdots, H$, and hence the total number of nodes in the network are $N=2^{H+1}-1$. The root of the tree at level $\ell=0$ is the source node and it encodes its data packets into $n$ coded packets and transmits them to its children. Every other node on the tree follows the probabilistic forwarding strategy with some fixed forwarding probability $p > 0$. Nodes that share a common parent receive	the same packets and hence will possess the same number of packets at the end of the probabilistic forwarding mechanism. We will assume that the nodes at level $H$ (i.e., the leaf nodes) do not transmit, as there is nothing to be gained in allowing them to do so. 

To get a handle on the minimum retransmission probability $p_{k,n,\delta}$ for a near-broadcast, we first look at the number of successful receivers, $R_{k,n}$. We can write $R_{k,n}=\sum_{\ell=0}^{H}R_\ell$, where $R_\ell$ is the number of
nodes at level $\ell$ that hold at least $k$ of the $n$ packets. Similarly, define $T_{k,n}=\sum_{\ell=0}^{H}T_\ell$, where $T_\ell$ is the number of transmissions by nodes at level $\ell$. Note that $T_0=n$ and $R_0=1$ since the source transmits all the $n$ packets. Also, $T_H=0$ from our assumption that leaf nodes do not transmit any packet.

In a tree, there is only a single path from the root to any
node in the tree. Thus, for a node $\v$ at level $\ell$ to receive a packet from the root, all the intermediate nodes on
the unique path from the root to $\v$ need to transmit the packet. Hence, for $\ell \ge 1$,
$$\P(\text{node }\v\text{ at level }\ell \text{ receives the }j\text{th packet})=p^{\ell-1}.$$
Since individual packets are transmitted independently of each other, we have
\begin{IEEEeqnarray}{rCl}\label{prob_nodes}
	\P(\text{node $\v$} &\ & \!\!\!\!\!\!\!\! \text{ at level $\ell$ receives at least $k$ out of $n$ packets}) \nonumber \\
	&=& \sum_{r=k}^{n} \binom{n}{r}p^{(\ell-1)r}(1-p^{\ell-1})^{n-r} \nonumber \\
	&=& \P(Z_{\ell-1} \ge k), \nonumber
\end{IEEEeqnarray}
where $Z_{\ell-1} \sim \mathrm{Bin}(n,p^{\ell-1})$ is a binomial random variable with parameters $n$ and $p^{\ell-1}$. Summing the above over all nodes $\v$ at level $\ell$, we obtain $\E[R_\ell] = 2^\ell \,\P(Z_{\ell-1} \ge k)$, and hence, 
\begin{equation}\label{rCl}
\E[R_{k,n}] \ = \  1+\mathbb{E}\left[\sum_{\ell=1}^{H}R_\ell\right] \ = \ 1+ \sum_{\ell=1}^{H}2^\ell \, \P(Z_{\ell-1}\ge k).
\end{equation}
Similarly,  a node $\v$ at level $\ell \in \{0,1,\cdots , H-1\}$ receives a packet from the source and transmits it with probability $p^\ell$. This gives the total expected number of transmissions for a transmission probability $p$ to be
$$
\E[T_{k,n}] \ = \ \sum_{\ell=0}^{H-1} \E[T_\ell] \ = \  n\frac{(2p)^H-1}{2p-1}.
$$
Thus, $\E[T_{k,n}]$ is a monotonically increasing function of $p$, from which it can be inferred that
\begin{equation}
\tau_{k,n,\delta}=n \frac{(2p_{k,n,\delta})^{H}-1}{2p_{k,n,\delta}-1}.
\label{eq:tau_tree}
\end{equation}
$p_{k,n,\delta}$ is the minimum probability such that $\mathbb{E}\left[\frac{R_{k,n}}{N}\right] \geq 1-\delta$. From our computation above and the expression for $N$, it can be written as 
$$	p_{k,n,\delta} \ := \ \inf\left\{p\  \bigg|\ \frac{1+ \sum_{\ell=1}^{H}2^\ell \, \P(Z_{\ell-1}\ge k)}{2^{H+1}-1} \geq 1-\delta \right\},$$
where $Z_\ell \sim \mathrm{Bin}(n,p^\ell)$ for $\ell = 0,1,\ldots,H-1$.
The inequality inside the parantheses can be rewritten as
	\begin{equation}\label{exprtomin}
	\frac{\sum_{\ell=0}^{H-1}2^{\ell+1}\mathbb{P}(Z_\ell \leq k-1)}{2^{H+1}-1} \leq \delta.
	\end{equation}
An analysis starting from \eqref{exprtomin} yields the two propositions below, which provide good lower and upper bounds on $p_{k,n,\delta}$. These bounds are plotted, for $k=100$, $\delta = 0.1$ and $H = 50$, in Fig.~\ref{fig:tree_plots}(a) along with the exact values of $p_{k,n,\delta}$ obtained numerically from  \eqref{exprtomin}. The corresponding plots for $\tau_{k,n,\delta}$, obtained via  \eqref{eq:tau_tree}, are shown in Fig.~\ref{fig:tree_plots}(b). 
	
	\begin{prop}
		Let $k \ge 2$, $H \ge 2$, and $0 \le \delta < \frac18$ be fixed. For all $n \ge k$, we have 
		$
		p_{k,n,\delta} > {\left(\frac{k-1}{n}\right)}^{\frac{1}{H-1}}.
		$
		
		In the case of $k=1$ and $n > 1$, the lower bound can be improved to
		$
		p_{k,n,\delta} > {\left(\frac{1}{n}\right)}^{\frac{1}{H-1}}.
		$
		\label{prop:p_lobnd}
	\end{prop}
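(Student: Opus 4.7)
The plan is to argue from the equivalent characterization \eqref{exprtomin}, which says $p_{k,n,\delta}$ is the infimum of those $p$ at which
$$
f(p) \ := \ \frac{\sum_{\ell=0}^{H-1} 2^{\ell+1}\, \P(Z_\ell \le k-1)}{2^{H+1}-1}
$$
drops to $\delta$ or below. Each $\P(Z_\ell \le k-1)$ is polynomial in $p$ and non-increasing in $p$ (by stochastic monotonicity of binomials), so $f$ is continuous and non-increasing on $[0,1]$. Consequently, writing $p^{\star}$ for the claimed lower bound, it is enough to show the strict inequality $f(p^{\star}) > \delta$: continuity then yields $f > \delta$ on some interval $[p^{\star}, p^{\star}+\epsilon)$, which forces $p_{k,n,\delta} > p^{\star}$.

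The key simplification is to discard all terms except $\ell = H-1$:
$$
f(p^{\star}) \ \ge \ \frac{2^H \, \P(Z_{H-1} \le k-1)}{2^{H+1}-1} \ > \ \tfrac{1}{2}\,\P(Z_{H-1} \le k-1),
$$
where the second inequality uses $H \ge 2$, which makes the prefactor strictly exceed $\tfrac12$. So the task reduces to showing $\P(Z_{H-1} \le k-1) \ge \tfrac14$ at $p = p^{\star}$; combined with the hypothesis $\delta < \tfrac18$, this gives $f(p^{\star}) > \tfrac18 > \delta$. For $k \ge 2$, the choice $p^{\star} = ((k-1)/n)^{1/(H-1)}$ makes $Z_{H-1} \sim \mathrm{Bin}(n, p^{H-1})$ have the integer mean $n p^{H-1} = k-1$; invoking the classical fact that every median of $\mathrm{Bin}(n,q)$ lies in $[\lfloor nq\rfloor,\lceil nq\rceil]$, we see that $k-1$ is itself a median, so $\P(Z_{H-1} \le k-1) \ge \tfrac12$. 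For $k = 1$, $p^{\star} = (1/n)^{1/(H-1)}$ gives $\P(Z_{H-1} = 0) = (1-1/n)^n$, which is non-decreasing in $n$ and equals $\tfrac14$ at $n = 2$, hence is $\ge \tfrac14$ for every $n \ge 2$.

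The only non-routine ingredient is the binomial-median inequality used for $k \ge 2$; everything else is direct arithmetic with the prefactor $2^H/(2^{H+1}-1)$ and the elementary bound on $(1-1/n)^n$. One could retain more terms of the sum in \eqref{exprtomin} to sharpen the constant, but keeping only the deepest level $\ell = H-1$ is exactly matched to the hypothesis $\delta < \tfrac18$ and is what makes the final threshold fall out cleanly.
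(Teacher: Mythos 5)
Your proof is correct and follows essentially the same route as the paper: drop all terms of \eqref{exprtomin} except $\ell = H-1$, use the binomial mean--median relation to get $\P(Z_{H-1}\le k-1)\ge \tfrac12$ for $k\ge 2$ and the bound $(1-1/n)^n\ge \tfrac14$ for $k=1$, and conclude from $\delta<\tfrac18$. Your explicit monotonicity-plus-continuity step to upgrade $p_{k,n,\delta}\ge p^{\star}$ to the strict inequality is a small but welcome tightening of the paper's argument, which passes over that point silently.
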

		\begin{IEEEproof}
			Suppose that $p$ is such that $np^{H-1} \le k-1$. Then, $Z_{H-1}$ has mean at most $k-1$. As a result, the median of $Z_{H-1}$ is also at most $k-1$ \cite[Corollary~3.1]{jogdeo1968monotone}. In other words, $\P(Z_{H-1} \le k-1) \ge \frac12$. Consequently, $\sum_{l=0}^{H-1}2^{l+1}\mathbb{P}(Z_l \leq k-1) \ge 2^H \mathbb{P}(Z_{H-1} \le k-1) \ge 2^{H-1}$, so that the left-hand side (LHS) of \eqref{exprtomin} is at least $\frac{2^{H-1}}{2^{H+1}-1} \ge \frac{2^{H-1}}{2^{H+1}} = 0.25 > \delta$. Hence, for \eqref{exprtomin} to hold, we must have $np^{H-1} > k-1$, from which the lower bound on $p_{k,n,\delta}$ follows.
			
			In the case of $k=1$, suppose that $p \le {\left(\frac{1}{n}\right)}^{H-1}$. Then, $\mathbb{P}(Z_{H-1} = 0) = (1-p^{H-1})^n \ge (1-\frac{1}{n})^n \ge (1-\frac{1}{2})^2 = 0.25$, for all $n \ge 2$. Hence, $\sum_{l=0}^{H-1}2^{l+1}\mathbb{P}(Z_l \leq k-1) \ge 2^H \mathbb{P}(Z_{H-1} = 0) \ge 2^{H-2}$. As a result, the LHS of \eqref{exprtomin} is at least $\frac{2^{H-2}}{2^{H+1}} = 0.125 > \delta$. Thus, again, for \eqref{exprtomin} to hold, we need $p > {\left(\frac{1}{n}\right)}^{H-1}$.	
		\end{IEEEproof}
		
	\medskip
	
	\begin{prop}
		Let $k \ge 2$, $H \ge 2$, and $0 < \delta \le 1$ be fixed, and let $\delta' :=  \min\left\{\delta  \left(\frac{2^{H+1}-1}{2^{H+1}-2}\right), 1 \right\}$. Then, for all $n \ge 1$, we have
		$$
		p_{k,n,\delta} \le \min\left\{{\left(\frac{k-1+t}{n}\right)}^{\frac{1}{H-1}},1\right\}, 
		$$
		where $t = \sqrt{2(k-1)(-\ln\delta') + (\ln\delta')^2} - \ln\delta'$.
		In the case of $k=1$, the bound
		$$
		p_{k,n,\delta} \le \min\left\{{\left(\frac{-\ln\delta'}{n}\right)}^{\frac{1}{H-1}}, 1\right\}
		$$
		holds for all $n \ge 1$.
		\label{prop:p_upbnd}
	\end{prop}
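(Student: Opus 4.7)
The plan is to work from the characterization \eqref{exprtomin} and exhibit an explicit value of $p$ that makes the left-hand side at most $\delta$; any such $p$ is then an upper bound on $p_{k,n,\delta}$. The starting observation is that for $p \in (0,1)$, the success probability $p^\ell$ of $Z_\ell$ is decreasing in $\ell$, so the binomial mean $np^\ell$ is decreasing in $\ell$, and hence $\mathbb{P}(Z_\ell \le k-1)$ is nondecreasing in $\ell$. Therefore,
\[
\sum_{\ell=0}^{H-1}2^{\ell+1}\,\mathbb{P}(Z_\ell\le k-1) \ \le\ \mathbb{P}(Z_{H-1}\le k-1)\sum_{\ell=0}^{H-1}2^{\ell+1} \ =\ \mathbb{P}(Z_{H-1}\le k-1)\,(2^{H+1}-2),
\]
so \eqref{exprtomin} is implied by $\mathbb{P}(Z_{H-1}\le k-1)\le \delta'$, where $\delta'$ is defined as in the proposition. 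It therefore suffices to find $p$ such that this single tail bound holds (the clipping at $1$ in $\delta'$ handles the trivial case where $\delta$ is so close to $1$ that any $p\le 1$ already suffices).

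Next I will use a standard lower-tail Chernoff bound for the binomial: if $Z\sim\mathrm{Bin}(n,q)$ with mean $\mu=nq$ and $0\le a\le \mu$, then
\[
\mathbb{P}(Z\le \mu - a)\ \le\ \exp\!\left(-\frac{a^2}{2\mu}\right).
\]
I would choose $p$ so that $np^{H-1}=k-1+t$ for a parameter $t>0$; then $\mu = k-1+t$ and, taking $a=t$,
\[
\mathbb{P}(Z_{H-1}\le k-1)\ \le\ \exp\!\left(-\frac{t^2}{2(k-1+t)}\right).
\]
Setting this upper bound equal to $\delta'$ and solving the resulting quadratic $t^2 + 2t\ln\delta' + 2(k-1)\ln\delta' = 0$ for $t>0$ yields exactly $t=\sqrt{2(k-1)(-\ln\delta')+(\ln\delta')^2}-\ln\delta'$. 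For this choice of $t$ the condition $\mathbb{P}(Z_{H-1}\le k-1)\le\delta'$ holds, so $p_{k,n,\delta}\le ((k-1+t)/n)^{1/(H-1)}$; capping by $1$ covers the case where this quantity exceeds $1$ (in which case $p=1$ trivially works).

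For the case $k=1$, the binomial tail becomes exact: $\mathbb{P}(Z_{H-1}=0)=(1-p^{H-1})^n\le e^{-np^{H-1}}$, so imposing $e^{-np^{H-1}}\le \delta'$ gives the condition $np^{H-1}\ge -\ln\delta'$ directly, without any Chernoff machinery. This yields the tighter bound $p_{k,n,\delta}\le(-\ln\delta'/n)^{1/(H-1)}$, again with a cap at $1$.

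The main thing to get right is the reduction step: one must verify monotonicity of $\mathbb{P}(Z_\ell\le k-1)$ in $\ell$ (an easy stochastic-dominance argument, since $\mathrm{Bin}(n,q)$ is stochastically increasing in $q$), and then be careful with the geometric sum $\sum_{\ell=0}^{H-1}2^{\ell+1}=2^{H+1}-2$ to match the definition of $\delta'$. After that the analysis is just a routine Chernoff estimate and solving a quadratic; I do not expect a substantive obstacle beyond bookkeeping.
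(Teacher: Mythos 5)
Your proposal is correct and follows essentially the same route as the paper's proof: reduce \eqref{exprtomin} to the single tail condition $\mathbb{P}(Z_{H-1}\le k-1)\le\delta'$ via stochastic monotonicity in $\ell$ and the geometric sum, set $np^{H-1}=k-1+t$, and solve the same quadratic in $t$; the only cosmetic difference is that you invoke the Chernoff bound directly in the form $\mathbb{P}(Z\le\mu-a)\le e^{-a^2/(2\mu)}$, whereas the paper derives that same estimate by combining the Kullback--Leibler form of the Chernoff bound with the inequality $D(x\,\|\,y)\ge\frac{(x-y)^2}{2y}$.
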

			
			\begin{IEEEproof}
				Note first that for all $l \le H-1$, we have\footnote{This is easily shown by a standard coupling argument --- see e.g., \cite[Lemma~IV.1]{ncc2018:probfwding}.} $\mathbb{P}(Z_l \le k-1) \le \mathbb{P}(Z_{H-1} \le k-1)$. Hence, $\sum_{l=0}^{H-1}2^{l+1}\mathbb{P}(Z_l \leq k-1) \le \bigl(\sum_{l=0}^{H-1} 2^{l+1}\bigr) \mathbb{P}(Z_{H-1} \le k-1) = (2^{H+1}-2) \mathbb{P}(Z_{H-1} \le k-1)$. Thus, to show that \eqref{exprtomin} holds, it suffices to prove that $\mathbb{P}(Z_{H-1} \le k-1) \le \delta \, \left(\frac{2^{H+1}-1}{2^{H+1}-2}\right)$. It is, therefore, enough to show that $\mathbb{P}(Z_{H-1} \le k-1) \le \delta'$.
				
				Consider $k = 1$ first. Take $p = \min\left\{1,{\left(\frac{C'}{n}\right)}^{\frac{1}{H-1}}\right\}$, where $C' = -\ln\delta'$. Then, $\mathbb{P}(Z_{H-1} \le k-1) = \mathbb{P}(Z_{H-1} = 0) = (1-p^{H-1})^n$, which, by choice of $p$, is either equal to $0$ (if $C' \ge n$) or $(1-C'/n)^n$ (if $C' < n$). In either case, $\mathbb{P}(Z_{H-1} = 0) $ is less than $e^{-C'} = \delta'$, as needed.
				
				Consider $k \ge 2$ now. Take $p = \min\left\{1,{\left(\frac{k-1+t}{n}\right)}^{\frac{1}{H-1}}\right\}$, where $t$ is as in the statement of the proposition. For $n \ge k-1+t$, we have $Z_{H-1} \sim \mathrm{Bin}(n,\frac{k-1+t}{n})$, so that \begin{align*}
				\mathbb{P}(Z_{H-1} \le k-1) & = \mathbb{P}\bigl(Z_{H-1} \le n({\textstyle \frac{k-1+t}{n} - \frac{t}{n}})\bigr) \notag \\
				& \le  e^{-n \, D(\frac{k-1}{n} \parallel \frac{k-1+t}{n})}
				\end{align*}
				via the Chernoff bound. Here, $D(\cdot \parallel \cdot)$ denotes the Kullback-Leibler divergence, defined as $D(x \parallel y) = x \ln \frac{x}{y} + (1-x) \ln \frac{1-x}{1-y}$. Using the bound $D(x \parallel y) \ge \frac{(x-y)^2}{2y}$, valid for $x \le y$ \cite{Okamoto1959}, we further have
				$$
				\mathbb{P}(Z_{H-1} \le k-1) \le e^{-n\left[\frac{(t/n)^2}{2(k-1+t)/n}\right]} = e^{-\frac{t^2}{2(k-1+t)}}.
				$$
				Thus, to conclude that $\mathbb{P}(Z_{H-1} \le k-1) \le \delta'$, as required, it suffices to show that $\frac{t^2}{2(k-1+t)} \ge -\ln \delta'$. This can be re-written as $t^2 + 2t \ln\delta' + 2(k-1)\ln\delta' \ge 0$, or equivalently, $(t+\ln\delta')^2 + 2(k-1)\ln\delta' - (\ln\delta')^2 \ge 0$, which is evidently satisfied by our choice of $t$. 	
			\end{IEEEproof}
			
 The following theorem, which summarizes the behaviour of $p_{k,n,\delta}$ on binary trees, is a direct consequence of Propositions~\ref{prop:p_lobnd} and~\ref{prop:p_upbnd}. 
 
	\begin{thm}\label{thm:pkndelta}
		Let $k \ge 2$, $H \ge 2$ and $0 < \delta < \frac18$ be fixed. We then have
		$
		p_{k,n,\delta} = \Theta\left({\textstyle{\left(\frac{k}{n}\right)}^{\frac{1}{H-1}}}\right),
		$
		where the constants implicit in the $\Theta$-notation\footnote{The notation $a(n) = \Theta(b(n))$ means that there are positive constants $c_1$ and $c_2$ such that $c_1 b(n) \le a(n) \le c_2 b(n)$ for all sufficiently large $n$.} may be chosen to depend only on $H$ and $\delta$.
	\end{thm}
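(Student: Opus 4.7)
The theorem is a direct corollary of Propositions~\ref{prop:p_lobnd} and~\ref{prop:p_upbnd}, and the plan is simply to combine the lower and upper bounds proven there, verifying that the resulting constants are absorbed into the $\Theta$-notation with dependence only on $H$ and $\delta$ (not on $k$).

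For the lower bound, the plan is to invoke Proposition~\ref{prop:p_lobnd}, which gives $p_{k,n,\delta} > ((k-1)/n)^{1/(H-1)}$. Since $k \ge 2$ implies $k-1 \ge k/2$, this immediately yields $p_{k,n,\delta} > 2^{-1/(H-1)} (k/n)^{1/(H-1)}$, so the lower-bound constant $c_1 = 2^{-1/(H-1)}$ depends only on $H$.

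For the upper bound, the plan is to invoke Proposition~\ref{prop:p_upbnd}, which (for $n$ large enough that $(k-1+t)/n \le 1$, i.e., the minimum-with-$1$ clause is inactive) gives $p_{k,n,\delta} \le ((k-1+t)/n)^{1/(H-1)}$, with $t = \sqrt{2(k-1)(-\ln\delta') + (\ln\delta')^2} - \ln\delta'$. The key step is to show that $k-1+t \le C(H,\delta)\cdot k$. Using $\sqrt{a+b}\le \sqrt{a}+\sqrt{b}$, I would first bound $t \le \sqrt{2(k-1)(-\ln\delta')} - 2\ln\delta'$. Then, since $(k-1)^2 + 1 \ge 0$ gives $\sqrt{2(k-1)} \le k$ for $k \ge 2$, and $-\ln\delta' > 0$ together with $k\ge 2$ gives $-2\ln\delta' \le -k\ln\delta'$, I would deduce $t \le k(\sqrt{-\ln\delta'} - \ln\delta')$. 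Hence $k-1+t \le k(1 + \sqrt{-\ln\delta'} - \ln\delta')$, yielding $p_{k,n,\delta} \le c_2 (k/n)^{1/(H-1)}$ with $c_2 = (1 + \sqrt{-\ln\delta'} - \ln\delta')^{1/(H-1)}$; recalling that $\delta'$ is a function of $H$ and $\delta$, this constant depends only on $H$ and $\delta$.

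The only real obstacle is the algebraic bookkeeping needed to ensure the upper-bound constant is uniform in $k$, but the hypothesis $k\ge 2$ makes this routine. The ``for all sufficiently large $n$'' qualifier in the $\Theta$-notation comfortably absorbs the bounded range $n < k-1+t$ on which the minimum-with-$1$ clause of Proposition~\ref{prop:p_upbnd} is active.
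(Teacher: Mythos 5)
Your proposal is correct and matches the paper's approach: the paper presents Theorem~\ref{thm:pkndelta} as a direct consequence of Propositions~\ref{prop:p_lobnd} and~\ref{prop:p_upbnd}, and your constant-tracking (using $k-1\ge k/2$ for the lower bound and $k-1+t\le k(1+\sqrt{-\ln\delta'}-\ln\delta')$ for the upper bound) correctly fills in the routine algebra the paper leaves implicit, with all constants depending only on $H$ and $\delta$.
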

	
Tighter bounds for $p_{k,n,\delta}$ can be obtained by bounding the binomial cumulative distributive function (CDF) in  \eqref{exprtomin} using Theorem \ref{bin_normal_cdf} of the appendix. This gives, 
\begin{equation}
p_{k,n,\delta} \le \inf \left\{p \ \Bigg| \frac{\sum_{\ell=0}^{H-1}2^{\ell+1} C_{n,p^\ell}(k)}{2^{H+1}-1} \le \delta \right\}
\label{Eq:normal_tree_upper}
\end{equation}
and
\begin{equation}
p_{k,n,\delta} \ge \inf \left\{p \ \Bigg| \frac{\sum_{\ell=0}^{H-1}2^{\ell+1} C_{n,p^\ell}(k-1)}{2^{H+1}-1}\le \delta\right\},
\label{Eq:normal_tree_lower}
\end{equation}
where $C_{n,q}(k) = \Phi \left(\mathrm{sgn}\left(\frac{k}{n}-q\right)\sqrt{2nD(\frac{k}{n} \left|\right|  q)}\right)$.
The plots in Fig.~\ref{fig:tree_plots} provide a theoretical explanation for why $\tau_{k,n,\delta}$ increases with $n$. Another confirmation of this behaviour can be obtained by substituting $p_{k,n,\delta} = c {\bigl(\frac{k}{n}\bigr)}^{\frac{1}{H-1}}$, for a suitable positive constant $c \equiv c(H,\delta)$, into the expression for $\tau_{k,n,\delta}$ in  \eqref{eq:tau_tree}. This yields 
	\begin{equation*}
	\tau_{k,n,\delta}  \ =  \ \frac{\left(n^{\frac{1}{H-1}}\right)^H-\kappa^H}{
	n^{\frac{1}{H-1}}-\kappa}, \label{taukrho2}
	\end{equation*}
	where $\kappa=2ck^{\frac{1}{H-1}}$. Since this expands as $\sum_{j=0}^{H-1}\kappa^{H-1-j}\left(n^{\frac{1}{H-1}}\right)^j$, it is clear that the expected total number of transmissions increases with $n$ (for fixed $c,k$ and $H$).
	\begin{figure}
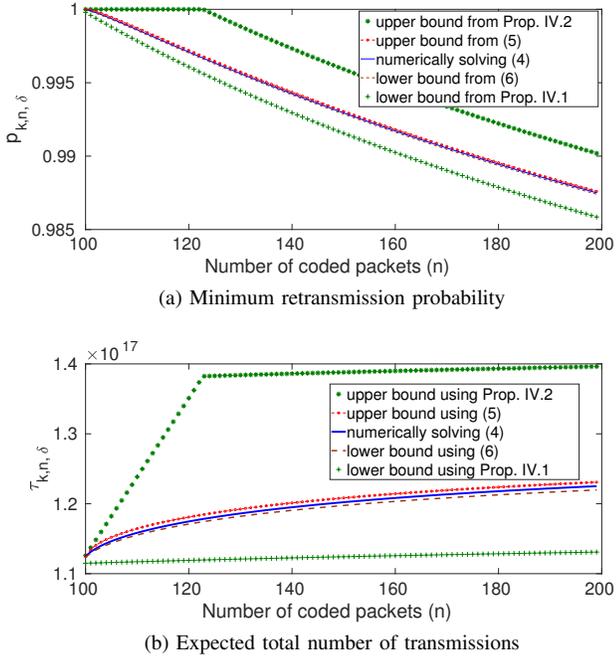
 
		\centering
		\subfloat[Minimum retransmission probability]{%
			\includegraphics[width=\linewidth]{./Images/tree_prob_bound}}
		\label{4a}
		\subfloat[Expected total number of transmissions]{%
			\includegraphics[width=\linewidth]{./Images/tree_trans_bound}}
		\label{4b}\\
		\caption{The middle curves are plots of the true values of $p_{k,n,\delta}$ and $\tau_{k,n,\delta}$ obtained from (\ref{exprtomin}) and \eqref{eq:tau_tree}, for $k = 100$, $\delta = 0.1$ and $H=50$. The other curves are bounds obtained via Propositions \ref{prop:p_lobnd} and \ref{prop:p_upbnd},  \eqref{Eq:normal_tree_lower},   \eqref{Eq:normal_tree_upper} and \eqref{eq:tau_tree}.}
		\label{fig:tree_plots} 
	\end{figure}
	
In summary, introducing redundancy in the form of coding into the probabilistic retransmission protocol on a rooted binary tree (and more generally, on a rooted $d$-ary tree) is not beneficial in terms of the overall energy expenditure in the network.
	
\section{Grids}\label{sec:grid}

	\begin{figure}
		\centering
		\includegraphics[width=0.4\linewidth]{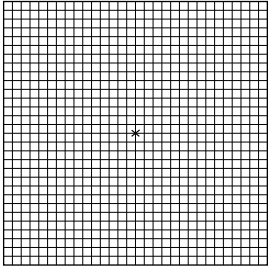}
		\caption{The source node ($\times$) is at the centre of the $31 \times 31$ grid.}
		\label{fig:grid31}
		\vspace{-1.5em}
	\end{figure}%
	
	Consider, for an odd integer $m > 1$, the $m \times m$ grid $\Gamma_m := {[-\frac{m-1}{2},\frac{m-1}{2}]}^2 \cap \Z^2$ centred at the origin. The source node is assumed to be at the centre of the grid. Simulation results for the probabilistic forwarding algorithm on the $31 \times 31$ grid (in Fig.~\ref{fig:grid31}) were presented in Fig. \ref{fig:grid_simu}. In this section, we try to explain these observations by developing an analysis that is at least valid for large $m$. Specifically, we turn to the theory of site percolation on the integer lattice $\Z^2$ to explain the $p_{k,n,\delta}$ and $\tau_{k,n,\delta}$ curves obtained via simulations on large grids $\G_m$.

\subsection{Site percolation on $\Z^2$} \label{sec:perc}
	
	We start with a brief description of the site percolation process (see e.g.\ \cite{grimmett}) on $\Z^2$. This is an i.i.d.\ process ${(X_u)}_{u\in\Z^2}$, with $X_u \sim \text{Ber}(p)$ for each $u \in \Z^2$, where the probability $p \in [0,1]$ is a parameter of the process. Let $\P_1$ denote the push-forward measure of the process on $\{0,1\}^{\Z^2}$ (or, in other words, the product measure $\otimes_{u} \nu_u$, with $\nu_u \sim \text{Ber}(p) \ \forall\, u \in \Z^2$). A node or \emph{site} $u \in \Z^2$ is \emph{open} if $X_u = 1$, and is \emph{closed} otherwise. For $u = (u_x,u_y) \in \Z^2$, define $|u| := |u_x|+|u_y|$. Two sites $u$ and $v$ are joined by an edge, denoted by $u$---$v$, iff $|u-v| = 1$. The next few definitions are made with respect to a given realization of the process ${(X_u)}_{u\in\Z^2}$. Two sites $u$ and $v$ are connected by an \emph{open path}, denoted by $u \longleftrightarrow v$, if there is a sequence of sites $u_0 = u, u_1,u_2,\ldots,u_n = v$ such that $u_k$ is open for all $k \in \{0,1,\ldots, n\}$ and $u_{k-1}$---$u_k$ for all $k \in [n]$. The \emph{open cluster}, $C_u$, containing the site $u$ is defined as $C_u=\{v\in \mathbb{Z}^2 | u \longleftrightarrow v\}$. Thus, $C_u$ consists of all sites connected to $u$ by open paths. In particular, $C_u = \emptyset$ if $u$ is itself closed. The \emph{boundary}, $\partial C_u$, of a non-empty open cluster $C_u$ is the set of all closed sites $v \in \Z^2$ such that \mbox{$v$---$w$} for some $w \in C_u$. The set $C_u^+ := C_u \cup \partial C_u$ is called an \emph{extended cluster}. The cluster $C_u$ (resp.\ $C_u^+$) is termed an \emph{infinite open cluster (IOC)} (resp.\ \emph{infinite extended cluster (IEC)}) if it has infinite cardinality. Note that $C_u^+$ is infinite iff $C_u$ is infinite.

It is well-known that there exists a \emph{critical probability} $p_c \in (0,1)$ such that for all $p < p_c$, there is almost surely (with respect to $\P_1$) no IOC, while for all $p > p_c$, there is almost surely a unique IOC. We do not know what happens at $p = p_c$, as the exact value of $p_c$ is itself not known (for site percolation on $\Z^2$). It is believed that $p_c \approx 0.59$ \cite[Chapter~1]{grimmett}. Another quantity of interest, which will play a crucial role in our analysis, is the \emph{percolation probability} $\theta(p)$, defined to be the probability that the origin $\0$ is in an IOC. In our analysis, we also consider the probability, $\theta^+(p)$, of the origin $\0$ being in an IEC. Clearly, from our definition of the IEC, for $p < p_c$, we have $\theta^+(p) = \theta(p) = 0$; for $p > p_c$, it is not difficult to see that $\theta^+(p) \ge \theta(p) > 0$. It is known that $\theta(p)$ is non-decreasing and infinitely differentiable in the region $p>p_c$ \cite{russo1978note}, but there is no analytical expression known for it. The following lemma, outlined in \cite{Shen2006dirbroad}, expresses $\theta^+(p)$ in terms of $\theta(p)$. 

\begin{lem}
For any $p > p_c$, we have $\theta^+(p)=\frac{\theta(p)}{p}$.
\label{lem:theta}
\end{lem}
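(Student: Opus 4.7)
The plan is to condition on the state $X_\0$ of the origin and exploit the structure of extended clusters. Throughout, I work in the regime $p > p_c$, where there is almost surely a unique IOC, denoted $\mathcal{I}$; correspondingly $\mathcal{I}^+ := \mathcal{I} \cup \partial\mathcal{I}$ is the unique IEC.

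First, I would establish the characterization
\[
\{\0 \in \mathcal{I}^+\} \;=\; \{\exists\, v \sim \0 : v \in \mathcal{I}\} \quad \text{a.s.}
\]
The ``$\supseteq$'' direction is immediate: if some $v \sim \0$ lies in $\mathcal{I}$, then either $\0$ is open (so $\0 \in C_v = \mathcal{I}$) or $\0$ is closed (so $\0 \in \partial\mathcal{I}$). For ``$\subseteq$'', the only nontrivial case is $\0 \in \mathcal{I}$, where one must argue that $\0$ has some neighbor also in $\mathcal{I}$: otherwise its component in $\mathcal{I}$ would be the singleton $\{\0\}$, contradicting $|\mathcal{I}| = \infty$.

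Second, I would note the set-theoretic identity
\[
\{\0 \in \mathcal{I}\} \;=\; \{\0 \in \mathcal{I}^+\} \cap \{X_\0 = 1\},
\]
which holds because the open sites of any extended cluster form exactly the cluster itself (if $X_\0 = 1$, then $\0 \notin \partial\mathcal{I}$, so $\0 \in \mathcal{I}^+$ forces $\0 \in \mathcal{I}$). Taking probabilities gives $\P_1(\0 \in \mathcal{I}^+,\, X_\0 = 1) = \theta(p)$.

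Third, and this is the main obstacle, I would show that $\{\0 \in \mathcal{I}^+\}$ is independent of $X_\0$. Using the first step, this reduces to showing that the event ``some neighbor of $\0$ lies in $\mathcal{I}$'' is a.s.\ determined by $\{X_v : v \neq \0\}$ alone: namely, it coincides a.s.\ with the event that some neighbor of $\0$ belongs to an IOC of the configuration restricted to $\Z^2 \setminus \{\0\}$. The key point is that if flipping $X_\0$ were to change whether some neighbor $v$ lies in an infinite cluster, then $\0$ would be ``pivotal'', its removal splitting two infinite open clusters---an event of $\P_1$-probability zero by uniqueness of the IOC (Burton--Keane). With independence in hand,
\[
\theta(p) \;=\; \P_1(\0 \in \mathcal{I}^+) \cdot \P_1(X_\0 = 1) \;=\; p\cdot \theta^+(p),
\]
and $\theta^+(p) = \theta(p)/p$ follows.
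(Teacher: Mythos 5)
Your proof is correct and follows essentially the same route as the paper's: write $\theta(p)=\P_1(\0\in C^+ \text{ and } \0\text{ is open})$ and factor it using the independence of $\{\0\in C^+\}$ from the state of the origin. The one place your write-up goes beyond the paper (which merely asserts that $\{\0\in C^+\}$ is determined by the sites other than $\0$) is also the one place it wobbles: the implication ``if flipping $X_\0$ changes whether some neighbour $v$ lies in an infinite cluster, then removing $\0$ splits two infinite open clusters'' is not valid as stated, since closing $\0$ could simply leave $v$ in a \emph{finite} cluster while a single infinite cluster survives elsewhere; so Burton--Keane is not the right tool here. The coincidence you claim is nevertheless true for an elementary, deterministic reason: opening $\0$ can only enlarge clusters (giving one inclusion), and conversely, if a neighbour $v$ of $\0$ lies in an infinite open cluster, then an infinite self-avoiding open path started at $v$ either avoids $\0$ entirely or has a tail beyond $\0$ that starts at another neighbour of $\0$ and avoids $\0$ thereafter; either way some neighbour of $\0$ is joined to infinity by an open path not using $\0$. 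With that repair, your characterization of $\{\0\in C^+\}$ as ``some neighbour of $\0$ is joined to infinity off $\0$'' holds, the event is measurable with respect to $\{X_w:w\ne\0\}$, and the factorization $\theta(p)=p\,\theta^+(p)$ goes through exactly as in the paper.
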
 
\begin{proof}
Let $C$ and $C^+$ be the (unique) IOC and IEC, respectively. We then have
\begin{equation}
\theta(p) = \P_1(\0 \in C) = \P_1(\0 \in C^+ \text{ and } \0 \text{ is open}).
\label{eq:theta}
\end{equation}
Now, observe that the event $\{\0 \in C^+\}$ is determined purely by the states of the nodes other than the origin. Hence, this event is independent of the event that $\0$ is open. Thus, the right-hand side (RHS) of \eqref{eq:theta} equals $\theta^+(p) \cdot p$, which proves the lemma.
\end{proof}

\begin{figure}
	\centering
	\includegraphics[width=0.5\textwidth]{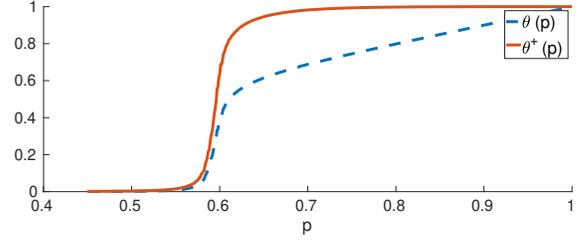}
	\caption{$\theta(p)$ and $\theta^+(p)$ vs. $p$}
	\label{fig:theta}
\end{figure}%

Fig.~\ref{fig:theta} plots $\theta(p)$ and $\theta^+(p)$ as functions of $p$, the former being obtained via simulations based on the theorem below. 

\begin{thm}\label{thm:theta}
Let $p>p_c$, and let $C$ and $C^+$, respectively, be the (almost surely) unique IOC and IEC of a site percolation process on $\mathbb{Z}^2$ with parameter $p$. Then, almost surely, we have
$$\lim_{m\rightarrow \infty}\frac{1}{m^2} |C\cap \G_m|=\theta(p) \ \text{ and } \ 
  \lim_{m\rightarrow \infty}\frac{1}{m^2}|C^+\cap \G_m|=\theta^+(p).$$
\end{thm}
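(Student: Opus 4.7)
My plan is to recognize both equalities as instances of the multidimensional Birkhoff ergodic theorem applied to the stationary random field $(X_u)_{u \in \Z^2}$. The key observation is that
$$|C \cap \G_m| = \sum_{u \in \G_m} \mathbf{1}_{\{u \in C\}} \quad \text{and} \quad |C^+ \cap \G_m| = \sum_{u \in \G_m} \mathbf{1}_{\{u \in C^+\}},$$
while $|\G_m| = m^2$ since $m$ is odd. The event ``$u$ lies in some IOC'' (equivalently, in the unique IOC, almost surely) is translation-covariant: if $T_u : \{0,1\}^{\Z^2} \to \{0,1\}^{\Z^2}$ denotes the shift by $u$, and we set $f(X) := \mathbf{1}_{\{\0 \in C(X)\}}$ and $f^+(X) := \mathbf{1}_{\{\0 \in C^+(X)\}}$, then $\mathbf{1}_{\{u \in C\}} = f(T_{-u}X)$ and similarly for $C^+$. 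Thus the two normalized sums are precisely the ergodic averages of $f$ and $f^+$ over the box $\G_m$.

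Next, I would invoke the fact that the product Bernoulli measure $\P_1$ on $\{0,1\}^{\Z^2}$ is ergodic (indeed mixing) under the natural $\Z^2$ translation action; this is a standard consequence of the Kolmogorov 0-1 law. The multidimensional Birkhoff--Wiener ergodic theorem for $\Z^d$-actions averaged over a F\o lner sequence of boxes (such as the centred squares $\G_m$) then yields, for any integrable $f$,
$$\lim_{m \to \infty} \frac{1}{m^2} \sum_{u \in \G_m} f(T_u X) = \E_{\P_1}[f] \quad \P_1\text{-a.s.}$$
Applied to $f$ and $f^+$, which are bounded hence integrable, this gives almost-sure convergence of $\frac{1}{m^2}|C \cap \G_m|$ and $\frac{1}{m^2}|C^+ \cap \G_m|$.

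Finally I would identify the limits. By definition, $\E[f] = \P_1(\0 \in C) = \theta(p)$ and $\E[f^+] = \P_1(\0 \in C^+) = \theta^+(p)$, closing both equalities. The main conceptual obstacle is checking that the events $\{u \in C\}$ and $\{u \in C^+\}$ are genuinely measurable translation-covariant functions of the configuration: this is where uniqueness of the IOC for $p > p_c$ is essential, since it lets us equate ``$u$ belongs to the distinguished cluster $C$'' with the intrinsically translation-invariant event ``$u$ belongs to some infinite open cluster'', and likewise for $C^+$. With that identification in place, the argument is a direct application of the ergodic theorem; there are no quantitative estimates required.
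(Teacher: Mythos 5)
Your proposal is correct and follows essentially the same route as the paper: the paper also writes $|C\cap\G_m|$ and $|C^+\cap\G_m|$ as box-sums of shifted indicators of $\{\0\in C\}$ and $\{\0\in C^+\}$ and applies a pointwise ergodic theorem for i.i.d.\ $\Z^2$-random fields (Tempelman's theorem, stated as Theorem~\ref{thm:tempelman} in the appendix), identifying the limits with $\theta(p)$ and $\theta^+(p)$. Your additional remarks on translation-covariance and on uniqueness of the IOC being what makes ``$u\in C$'' a shift of the event ``$\0$ is in some infinite cluster'' are correct and make explicit a point the paper leaves implicit.
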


The theorem is obtained as a straightforward application of an ergodic theorem for multi-dimensional i.i.d.\ random fields \cite[Proposition~8]{newman1981infinite} --- see Section~\ref{sec:ergthms} of the appendix. Using the dominated convergence theorem (DCT), we also have
\begin{align*}
\lim_{m\rightarrow \infty} \E\left[\frac{1}{m^2}|C\cap \G_m|\right]=\theta(p) &\hspace{1cm} \text{ and }\\
 \lim_{m\rightarrow \infty} \E\left[\frac{1}{m^2}|C^+\cap \G_m|\right]&=\theta^+(p).
\end{align*}

Based on the first equation above, to obtain an estimate of $\theta(p)$, the site percolation process with parameter $p$ was simulated on a $1001 \times 1001$ grid and the average fraction of nodes (averaged over $100$ realizations of the process) in the largest open cluster was taken to be the value of $\theta(p)$. These are the values of $\theta(p)$ plotted in Fig.~\ref{fig:theta}. We would like to emphasize that the plots in the figure should only be trusted for $p > p_c$, as Theorem~\ref{thm:theta} is only valid in that range. However, as the exact value of $p_c$ is unknown, simulation results are reported for the range of $p$ values shown in the plot.

\subsection{Relating site percolation to probabilistic forwarding} \label{sec:perc_to_fwding}
Site percolation on $\Z^2$ is a faithful model for probabilistic forwarding of a single packet on the infinite lattice $\Z^2$. The origin $\0$ is the source of the packet. The open cluster, $C_{\0}$, containing the origin $\0$ corresponds to the set of nodes that transmit (forward) the packet, and the extended cluster $C^+_{\0}$ corresponds to the set of nodes that receive the packet. The only caveat is that, since the source is assumed to always transmit the packet, we must consider only those realizations of the site percolation process in which the origin $\0$ is open. In other words, we must consider the site percolation process, conditioned on the event that the origin is open. By extension, the probabilistic forwarding of $n$ coded packets corresponds to $n$ independent site percolation processes on $\Z^2$, conditioned on the event that the origin is open in all $n$ percolations. 

Let O denote the event that the origin is open in all $n$ percolations. In our analysis, we will use $\P^\mathrm{o}$ and $\E^\mathrm{o}$, respectively, to denote the probability measure and expectation operator conditioned on the event O, and $\P$ and $\E$ for the unconditional versions of these. 


\subsection{Analysis of probabilistic forwarding on a large (finite) grid} \label{sec:grid_analysis}
In this section, we analyze the probabilistic forwarding mechanism on the finite grid $\G_m$ using the following approach. We map the probabilistic forwarding mechanism on $\G_m$ onto the probabilistic forwarding mechanism on the infinite $\Z^2$ lattice. From the discussion in the previous subsection, this is nothing but $n$ independent site percolations on $\Z^2$ conditioned on the event O. Using ergodic theorems for the site percolation process, we get a handle on the expected number of nodes that receive at least $k$ out of the $n$ packets from the origin on $\Z^2$. This, in turn, is used to obtain estimates of $p_{k,n,\delta}$ and $\tau_{k,n,\delta}$. In our analysis, we will assume that we operate in the super-critical region, i.e., $p>p_c$. We provide a justification for this assumption in Section \ref{sec:supercrit}. 

Denote by $R_{k,n}(\Gamma_m)$, the number of successful receivers in $\G_m$, i.e., the number of nodes that receive at least $k$ out of $n$ packets during the probabilistic forwarding mechanism on $\G_m$. The following theorem is our main result for grids. Its proof is quite technical, and is presented in the next subsection.

\begin{thm}
	For $p>p_c$, we have
	\begin{align*}
	\lim_{m\rightarrow\infty}  \E&\left[\frac{R_{k,n}(\Gamma_m)}{m^2} \right] \ = \ \\
	&\sum_{t=k}^{n} \sum_{j=k}^{t}\binom{n}{t}\binom{t}{j}(\theta^+(p))^{t+j}(1-\theta^+(p))^{n-j}.
	\end{align*}
	Equivalently,
	\begin{equation}
	\lim_{m\rightarrow\infty}  \E\left[\frac{R_{k,n}(\Gamma_m)}{m^2} \right] \ = \ \P(Y\ge k),
	\label{eq:pkndelta_grid}
	\end{equation}
	where $Y \sim \mathrm{Bin}(n,(\theta^+(p))^2)$.
	\label{thm:rkn}
\end{thm}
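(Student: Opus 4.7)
The plan is to exploit the correspondence from Section \ref{sec:perc_to_fwding} between probabilistic forwarding of $n$ coded packets on $\G_m$ and $n$ independent site percolation processes on $\Z^2$, conditioned on the event $\mathrm{O}$ that the origin $\0$ is open in every process. Writing $C^+_{\0,i}$ for the extended cluster of $\0$ in the $i$th percolation, a node $v$ is a successful receiver iff $v \in C^+_{\0,i}$ for at least $k$ indices $i \in [n]$; the expectation $\E$ in the theorem statement is then the expectation under the conditional measure $\P^{\mathrm{o}}$.

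Let $C_i$ and $C^+_i$ denote, respectively, the almost surely unique infinite open cluster and infinite extended cluster in the $i$th process, put $E_i := \{\0 \in C_i\}$, and let $T := \{i : E_i\text{ occurs}\}$. Under $\P^{\mathrm{o}}$, Lemma \ref{lem:theta} yields $\P^{\mathrm{o}}(E_i) = \theta(p)/p = \theta^+(p)$, and independence of the $n$ percolations makes each index land in $T$ independently with probability $\theta^+(p)$. On the event $\{T = S\}$: for $i \in S$, $C^+_{\0,i}$ coincides with $C^+_i$; for $i \notin S$, $C^+_{\0,i}$ is a finite cluster whose size has an exponential tail in the super-critical regime $p > p_c$. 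Hence $\E^{\mathrm{o}}\bigl[|C^+_{\0,i} \cap \G_m| \mid i \notin T\bigr]$ is uniformly bounded in $m$, so up to an $O(1)$ additive term, $R_{k,n}(\G_m)$ equals the count of $v \in \G_m$ lying in $C^+_i$ for at least $k$ indices $i \in S$. When $|S| < k$ this count is itself $O(1)$ and vanishes after dividing by $m^2$.

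For $|S| \ge k$, apply a multi-dimensional ergodic theorem (see Section \ref{sec:ergthms} of the appendix) to the $\{0,1\}^{|S|}$-valued stationary random field $\bigl(\mathbf{1}\{v \in C^+_i\}\bigr)_{i \in S}$, indexed by $v \in \Z^2$. This field is ergodic under $\Z^2$-translations because independence across the $n$ percolations preserves the ergodicity underlying Theorem \ref{thm:theta}. Therefore the fraction of $v \in \G_m$ whose coordinate vector has Hamming weight at least $k$ converges almost surely --- and by dominated convergence also in expectation --- to $\P(B_{|S|} \ge k)$, where $B_{|S|} \sim \mathrm{Bin}(|S|,\theta^+(p))$; here we use translation invariance ($\P(v \in C^+_i) = \theta^+(p)$) together with the independence across $i$. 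Writing $\theta := \theta^+(p)$ and averaging over $T$ with $\P^{\mathrm{o}}(T = S) = \theta^{|S|}(1-\theta)^{n-|S|}$, then grouping by $t = |S|$, yields
\[
\lim_{m \to \infty}\E^{\mathrm{o}}\!\left[\frac{R_{k,n}(\G_m)}{m^2}\right] = \sum_{t=k}^n \binom{n}{t}\theta^t(1-\theta)^{n-t}\sum_{j=k}^t \binom{t}{j}\theta^j(1-\theta)^{t-j},
\]
which, via $(1-\theta)^{n-t}(1-\theta)^{t-j} = (1-\theta)^{n-j}$, is the claimed double sum. Swapping the order of summation, applying $\binom{n}{t}\binom{t}{j} = \binom{n}{j}\binom{n-j}{t-j}$, and using $(1-\theta)(1+\theta) = 1 - \theta^2$ then collapses the expression to $\P(Y \ge k)$ for $Y \sim \mathrm{Bin}(n, \theta^2)$.

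The main obstacle is the quantitative control of the finite-cluster contributions: one needs a uniform-in-$m$ bound on $\E^{\mathrm{o}}[|C^+_{\0,i} \cap \G_m|\,\mathbf{1}_{E_i^c}]$, which hinges on exponential decay of finite open-cluster sizes in super-critical site percolation. A secondary subtlety is verifying that the $n$-fold product of percolation random fields is ergodic under $\Z^2$-translations so that Theorem \ref{thm:theta} lifts to the multi-percolation setting --- standard, but worth recording explicitly.
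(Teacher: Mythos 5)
Your overall strategy coincides with the paper's: map the $n$-packet forwarding to $n$ independent site percolations conditioned on $\mathrm{O}$, condition on the random set $T$ of percolations in which $\0$ lies in the infinite cluster, discard the finite-cluster contributions, apply the multiparameter ergodic theorem when $|T|\ge k$, and finish with the binomial identity (your direct manipulation via $\binom{n}{t}\binom{t}{j}=\binom{n}{j}\binom{n-j}{t-j}$ is a correct, slightly different derivation of the paper's Proposition~\ref{prop:simpexpr}). However, there is one genuine gap. You identify a successful receiver in $\G_m$ with a node $v\in\G_m$ lying in the extended cluster $C^+_{\0,i}$ of the origin \emph{on $\Z^2$} for at least $k$ indices $i$. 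This is only an overcount: a node of $\G_m$ may be connected to $\0$ in the $\Z^2$ percolation solely through open paths that exit $\G_m$, and such a node does \emph{not} receive that packet in the finite-grid forwarding process. Your argument never controls this discrepancy. The paper devotes Lemma~\ref{lem:outpath} (and Corollary~\ref{cor:outpath}) to exactly this point: it shows that the expected density of nodes receiving some packet only ``from outside $\G_m$'' vanishes as $m\to\infty$, via an open-loop-in-the-annulus $\G_m\setminus\G_{m,\epsilon}$ argument combined with the FKG inequality to handle the conditioning on $\mathrm{O}$ and on $A_T^+$. This is the most technical ingredient of the proof and cannot be waved away; without it you have computed the limit of $\frac{1}{m^2}\E^{\mathrm{o}}\bigl[|\cR_{k,n}(\Z^2)\cap\G_m|\bigr]$ rather than of $\frac{1}{m^2}\E\bigl[R_{k,n}(\G_m)\bigr]$.

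A secondary, minor point: you justify discarding the clusters $C^+_{\0,i}$ for $i\notin T$ by asserting an exponential tail for supercritical finite cluster sizes. In two dimensions the supercritical finite-cluster size distribution decays only like $\exp(-c\sqrt{s})$, and in any case no quantitative tail is needed: since $C^+_{\0,i}$ is $\P^{\mathrm{o}}$-a.s.\ finite given $i\notin T$, one has $\frac{1}{m^2}|C^+_{\0,i}\cap\G_m|\to 0$ a.s.\ with the trivial bound by $1$, and the dominated convergence theorem suffices --- this is how the paper argues. Your ergodic-theorem step and the transfer of the a.s.\ limit to the positive-probability conditioning events $\{T=S\}$ are handled correctly.
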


Thus, for $k,n,\delta$ fixed, we have for all sufficiently large grids $\G_m$,
\begin{equation}
p_{k,n,\delta}(\Gamma_m) \ \approx \ \inf \{ p \ | \ Pr(Y\ge k) \ge 1-\delta\}, \label{eq:pkndelta_simp}
\end{equation}
where $Y\sim \mathrm{Bin}(n,(\theta^+(p))^2)$. This can be evaluated numerically using the values of $\theta^+(p)$ plotted in Fig.~\ref{fig:theta}. For large $k$ and $n$, the probability $\P(Y \ge k)$ can be approximated well using the bounds given in Theorem~\ref{bin_normal_cdf} in the appendix. A sample of results thus obtained are shown in Fig.~\ref{fig:ergodic_pkndelta}. It is clear that these results match very well with those obtained from simulations on a $501 \times 501$ grid.

	\begin{figure}[t]
		\centering
		\includegraphics[width=0.5\textwidth]{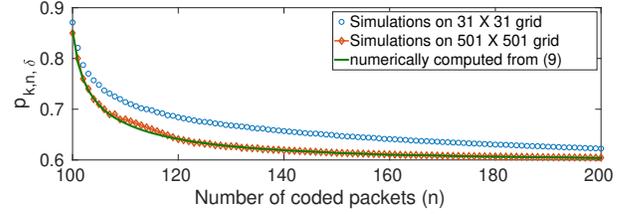}
		\caption{Comparison of the minimum forwarding probability obtained via simulations on a $31\times 31$ grid and a $501\times 501$ grid, with the results obtained numerically from \eqref{eq:pkndelta_simp}, for $k=100$ data packets and $\delta=0.1$.}
		\label{fig:ergodic_pkndelta}
		\vspace*{-1em}
	\end{figure}%
	\begin{figure}
		\centering
		\includegraphics[width=0.5\textwidth]{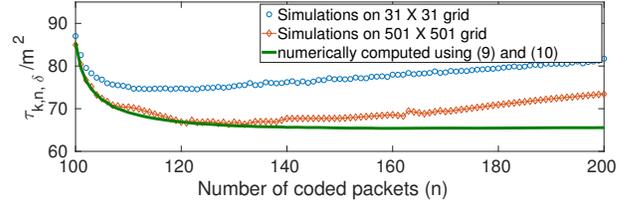}
		\caption{Comparison of the expected total number of transmissions, normalized by the grid size $m^2$, obtained via simulations on $\G_{31}$ and $\G_{501}$, with the expression from \eqref{eq:taukndelta}, for $k=100$ data packets and $\delta=0.1$.}
		\label{fig:ergodic_trans}
		\vspace*{-1em}
	\end{figure}%
We next look into estimating the expected total number of transmissions at a given forwarding probability $p$. Consider the transmission of a single packet on the finite grid $\G_m$. Let $T(\G_m)$ be the number of transmissions of the packet on the finite grid $\G_m$ and let $\mathcal{T}(\Z^2) \cap \G_m$ be the set of nodes in $\G_m$ which receive the packet from the origin and transmit it on the infinite $\Z^2$ lattice. It can be shown\footnote{This is shown using arguments entirely analogous to those used to show \eqref{eq:Rkn_Gm_Z2} in Section~\ref{sec:bigproof}. We omit the details.} that 
$$\lim_{m \rightarrow \infty}\frac{\E[T(\G_m)]}{m^2}=\lim_{m \rightarrow \infty}\frac{\E\left[\left|\mathcal{T}(\Z^2) \cap \G_m\right|\right]}{m^2}.$$
Now, $\mathcal{T}(\Z^2)$ is simply the open cluster $C_{\0}$ in the percolation framework. Thus, when normalized by the grid size $m^2$, the expected number of transmissions, $\E[T(\G_m)]$, for probabilistic forwarding on a large (but finite) grid $\G_m$ is well-approximated by $\E\bigl[|C_{\0} \cap \G_m| \ \big| \ \0 \text{ is open}\bigr]$. The following lemma gives an expression for this quantity in the limit as the grid size goes to infinity.

\begin{lem} For site percolation with $p > p_c$, we have
$$
 \lim_{m \to \infty} \frac{1}{m^2}\E\bigl[|C_{\0} \cap \G_m| \ \big| \ \0 \text{ is open}\bigr] \ = \ \frac{{\theta(p)}^2}{p}.
 $$
 \label{prop:C0}
 \end{lem}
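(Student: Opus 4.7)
The plan is to reduce the conditional expectation to an unconditional one, then establish pointwise convergence of $|C_{\0}\cap\G_m|/m^2$ to $\theta(p)\,\mathbf{1}\{\0\in C\}$ and pass to the limit by the dominated convergence theorem, where $C$ denotes the almost surely unique IOC. Since $C_{\0}=\emptyset$ whenever $\0$ is closed, one has
\begin{equation*}
\E\!\left[\frac{|C_{\0}\cap\G_m|}{m^2}\right] \ = \ p\cdot\E\!\left[\frac{|C_{\0}\cap\G_m|}{m^2} \,\Big|\, \0\text{ open}\right],
\end{equation*}
so the claim reduces to proving that the unconditional expectation tends to $\theta(p)^2$.

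To establish the pointwise limit, I would split on the event $\{\0\in C\}$. On this event, uniqueness of the IOC in the supercritical regime forces $C_{\0}=C$, so Theorem~\ref{thm:theta} gives $|C_{\0}\cap\G_m|/m^2 = |C\cap\G_m|/m^2\to\theta(p)$ almost surely. On the complementary event, either $\0$ is closed (and then $C_{\0}=\emptyset$) or $C_{\0}$ is an open cluster not equal to $C$, which by the almost-sure uniqueness of the IOC must be finite; in either sub-case $|C_{\0}|<\infty$, so $|C_{\0}\cap\G_m|/m^2\to 0$. Combining,
\begin{equation*}
\lim_{m\to\infty}\frac{|C_{\0}\cap\G_m|}{m^2} \ = \ \theta(p)\cdot\mathbf{1}\{\0\in C\}\quad\text{almost surely.}
\end{equation*}

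Because $|C_{\0}\cap\G_m|/m^2\le|\G_m|/m^2=1$, dominated convergence yields
\begin{equation*}
\lim_{m\to\infty}\E\!\left[\frac{|C_{\0}\cap\G_m|}{m^2}\right] \ = \ \theta(p)\cdot\P(\0\in C) \ = \ \theta(p)^2,
\end{equation*}
and dividing by $p$ produces $\theta(p)^2/p$, as required. The only genuine subtlety is ruling out that $C_{\0}$ is infinite on the event $\{\0\text{ open},\,\0\notin C\}$; this is where one invokes the supercritical uniqueness of the infinite cluster. The $1/p$ factor then emerges automatically, which is essentially the same identity underlying Lemma~\ref{lem:theta} since $\theta(p)=p\cdot\P(\0\in C\mid\0\text{ open})$.
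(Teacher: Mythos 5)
Your proposal is correct and follows essentially the same route as the paper: split on the event $\{\0 \in C\}$, use Theorem~\ref{thm:theta} and the uniqueness of the infinite cluster to get the pointwise limit $\theta(p)\,\mathds{1}\{\0\in C\}$, and conclude by dominated convergence. The only cosmetic difference is that you strip off the conditioning on $\{\0 \text{ open}\}$ at the outset and divide by $p$ at the end, whereas the paper carries the conditional measure throughout and computes $\P^{\0}(\0\in C)=\theta(p)/p$ at the last step; the two bookkeeping choices are equivalent.
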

 \begin{IEEEproof}
  We use $\P^{\0}$ and $\E^{\0}$, respectively, to denote the probability measure and expectation operator conditioned on the event that the origin $\0$ is open. Let $C$ be the (unique) IOC, and $A$ the event $\{\0 \in C\}$.  Then,
  \begin{align*}
  \lim_{m \to \infty} \E^{\0} & \left[\frac{1}{m^2} |C_{\0} \cap \G_m| \right] \\
  & = \lim_{m \to \infty} \E \left[\frac{1}{m^2} |C_{\0} \cap \G_m| \ \big| \ A \right] \P^{\0}(A) \\
   & \ \ \ \ \ \ \ \ \ \ +  \lim_{m \to \infty} \E^{\0} \left[\frac{1}{m^2} |C_{\0} \cap \G_m| \ \big| \ A^c \right] \P^{\0}(A^c) 
 \end{align*}
 
 Now, given $A^c$ (i.e., $\0 \notin C$), $C_{\0}$ is $\P^{\0}$-a.s.\ finite, and so by the DCT,
 ${\displaystyle \lim_{m \to \infty}} \E^{\0} \left[\frac{1}{m^2} |C_{\0} \cap \G_m| \ \big| \ A^c \right] = 0$. On the other hand, given $A$, we have $C_{\0} = C$. From Theorem~\ref{thm:theta}, we know that ${\displaystyle \lim_{m \to \infty}} \frac{1}{m^2} |C \cap \G_m| = \theta(p) \ \, \P_1$-a.s.. Moreover, this statement holds even when the probability measure $\P_1$ is conditioned on $A$, since $\P_1(A) = \theta(p) > 0$ for $p > p_c$. So, again by the DCT, ${\displaystyle \lim_{m \to \infty}} \E[\frac{1}{m^2} |C \cap \G_m| \mid A] = \theta(p)$. We have thus shown that
 $$
 \lim_{m \to \infty} \E^{\0} \left[\frac{1}{m^2} |C_{\0} \cap \G_m| \right] = \theta(p) \, \P^{\0}(A).
 $$
 
 The proof is completed by observing that $\P^{\0}(A) = \frac{\P_1(A)}{\P_1({\0} \text{ is open})} = \frac{\theta(p)}{p}$.
 \end{IEEEproof}
 \ \\
 Thus, in probabilistic forwarding of a single packet on a large grid $\G_m$, the expected number of transmissions, normalized by the grid size $m^2$, is approximately $\frac{{\theta(p)}^2}{p}$. Hence, when we have $n$ coded packets, by linearity of expectation, 
 the expected total number of transmissions, again normalized by the grid size $m^2$, is approximately $n \, \frac{{\theta(p)}^2}{p}$. In particular, setting $p = p_{k,n,\delta}$, we obtain
	\begin{equation}\label{eq:taukndelta}
	\frac{1}{m^2} \, \tau_{k,n,\delta}(\G_m) \ \approx \ n \frac{{\theta(p_{k,n,\delta})}^2}{p_{k,n,\delta}},
	\end{equation}
provided that $p_{k,n,\delta} > p_c$.

Fig.~\ref{fig:ergodic_trans} compares, for $k=100$ data packets and $\delta = 0.1$, the values of $\frac{1}{m^2} \tau_{k,n,\delta}$ obtained using \eqref{eq:taukndelta}, \eqref{eq:pkndelta_simp} and the $\theta(p)$ values from Fig.~\ref{fig:theta}, with those obtained via simulations on the $\G_{31}$ and $\G_{501}$ grids. The curve based on \eqref{eq:taukndelta}, \eqref{eq:pkndelta_simp} and $\theta(p)$ initially tracks the $\G_{501} $ curve well, but trails off after $n = 130$. This is because the former curve uses the approximation for $p_{k,n,\delta}$ in \eqref{eq:pkndelta_simp}, which, for any given $n$, is valid only for sufficiently large $m$. For values of $n$ larger than $130$, $m=501$ may not fall in the ``sufficiently large'' range. This is discussed in more detail in Section~\ref{sec:sufflargem}.

Nonetheless, it is instructive to note that, for fixed values of $k$ and $\delta$, the expression on the right-hand side (RHS) of \eqref{eq:taukndelta} is indeed minimized for some $n$. This can be verified numerically by plotting the RHS of \eqref{eq:taukndelta} using the values of $\theta(p)$ from Fig.~\ref{fig:theta} and the approximation to $p_{k,n,\delta}$ in \eqref{eq:pkndelta_simp}. Plots for $k=100$ and $\delta=0.1$ are shown in Fig.~\ref{fig:ergodic_bounds}. Observe that the curve plotted in Fig.~\ref{fig:ergodic_bounds}(b) is decreasing in $n$ till $n \approx 180$, and it increases thereafter, albeit very slowly. This indicates that, for $k=100$ and $\delta = 0.1$, the expected number of transmissions $\tau_{k,n,\delta}(\G_m)$ is minimized at $n \approx 180$ for all sufficiently large grids $\G_m$. Thus, our analysis provides theoretical validation, at least for large grids, for the observed behaviour of $\tau_{k,n,\delta}$ as a function of $n$, and indicates a benefit to introducing some coding into the probabilistic forwarding mechanism on grids.
	\begin{figure} 
		\centering
		\subfloat[Minimum retransmission probability]{%
			\includegraphics[width=\linewidth]{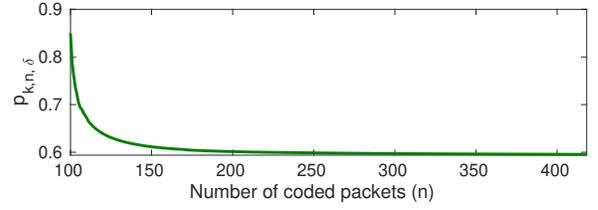}}
		\label{fig:ergodic_prob_bounds}
		\subfloat[Expected total number of transmissions normalized by the grid size $m^2$.]{%
			\includegraphics[width=\linewidth]{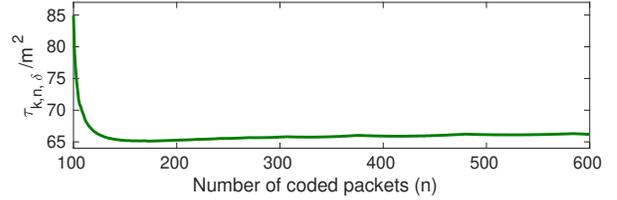}}
		\label{fig:ergodic_trans_bounds}\\
		\caption{The minimum forwarding probability is numerically computed from \eqref{eq:pkndelta_simp} and the expected number of transmissions is obtained via \eqref{eq:taukndelta}, for $k=100$ data packets and $\delta=0.1$.}
		\label{fig:ergodic_bounds} 
	\end{figure}

\subsection{Proof of Theorem~\ref{thm:rkn}} \label{sec:bigproof}
Let $\cR_{k,n}(\Z^2)$ denote the set of all nodes that receive at least $k$ of the $n$ coded packets during the probabilistic forwarding protocol on $\Z^2$. As a first step, we will show that $R_{k,n}(\Gamma_m)$ and $|\cR_{k,n}(\Z^2) \cap \G_m|$ are the same in expectation, in the limit as the grid size, $m$, goes to infinity. In general, it is only true that $R_{k,n}(\Gamma_m)$ is stochastically dominated\footnote{A random variable $X$ is stochastically dominated by a random variable $Y$ if $\P(X \ge x) \le \P(Y \ge x)$ for all $x \in \R$. For non-negative random variables, this implies that $\E[X] \le \E[Y]$.} by $|\cR_{k,n}(\Z^2) \cap \G_m|$, since a node in $\cR_{k,n}(\Z^2) \cap \G_m$ could receive packets from the origin through paths in $\Z^2$ that do not lie entirely within $\G_m$.

In the percolation jargon (on $\Z^2$), $\cR_{k,n}(\Z^2) \cap \G_m$ comprises those nodes of $\G_m$ that are in the extended cluster containing the origin ($C_{\0}^+$) in at least $k$ out of $n$ percolations. Recall that a node $u$ is in $C_{\0}^+$ if either the node $u$ or some one-hop neighbour of $u$ is connected to the origin through an open path. Call such an open path a \emph{conduit} (for a packet) from the origin to $u$. If a conduit lies completely within $\G_m$, we call it a \emph{$\G_m$-conduit}. We also say that, if vertex $u$ has a conduit, it is necessarily in $C_{\0}^+$.

The nodes in $\cR_{k,n}(\Z^2) \cap \G_m$ may have received some packets from the origin through $\G_m$-conduits, and some others through conduits that go outside $\G_m$. We are interested in the former, since, when operating on a finite grid $\G_m$, nodes of  $\cR_{k,n}(\Z^2) \cap \G_m$ without $\G_m$-conduits cannot be successful receivers in $\G_m$. More precisely, we are interested in those nodes\ of $\G_m$ which are part of the extended cluster containing the origin through at least one $\G_m$-conduit, in at least $k$ out of the $n$ percolations. Note that these are the nodes that receive at least $k$ out of the $n$ packets in the finite grid model; we denote this collection of nodes by $\cR_{k,n}(\Gamma_m)$. Thus, $|\cR_{k,n}(\Gamma_m)|=R_{k,n}(\Gamma_m)$. We denote the remaining nodes by $\overline{\cR}_{k,n}(\G_m) := (\cR_{k,n}(\Z^2) \cap \G_m) \backslash \cR_{k,n}(\Gamma_m)$. Thus, $\cR_{k,n}(\Gamma_m)$ and $\overline{\cR}_{k,n}(\G_m) $ form a partition of $\cR_{k,n}(\Z^2) \cap \G_m $, i.e.,
	\begin{align}
	\cR_{k,n}(\Gamma_m) &\cap \overline{\cR}_{k,n}(\G_m) =\emptyset
\nonumber \\
	&\text{and} \nonumber \\
	\cR_{k,n}(\Gamma_m) &\cup \overline{\cR}_{k,n}(\G_m) =\cR_{k,n}(\Z^2) \cap \G_m. 
	\label{eq:pathunion}
	\end{align}
 Note that any node in $\overline{\cR}_{k,n}(\G_m)$ has the property that for at least one of the packets it receives, any conduit through which it receives that packet \textit{necessarily} goes outside $\G_m$. Such a node is said to \textit{receive at least one packet from outside $\G_m$}. It does not receive this packet through any $\G_m$-conduit.

We first show that the expected fraction of nodes in $\G_m$ that receive at least one packet from outside $\G_m$ vanishes asymptotically with the grid size $m$. In this direction, we will need the following definition:
For $0< \epsilon < 4$, let
\begin{align*}
\Gamma_{m,\epsilon} :=\left\{\begin{aligned}
\G_{\left\lfloor m\sqrt{1-\frac{\epsilon}{4}}\right\rfloor}, \hspace{0.8cm} &\text{ if }\left\lfloor m\sqrt{1-\frac{\epsilon}{4}}\right\rfloor \text{ is odd}\\
\G_{\left\lfloor m\sqrt{1-\frac{\epsilon}{4}}\right\rfloor-1}, \hspace{0.8cm} &\text{ if }\left\lfloor m\sqrt{1-\frac{\epsilon}{4}}\right\rfloor \text{ is even}
\end{aligned}\right\}
\end{align*}
Recall that $\G_m$ was defined as 
$\Gamma_m := {[-\frac{m-1}{2},\frac{m-1}{2}]}^2 \cap \Z^2$ when $m$ was odd. We will think of $\Gamma_{m,\epsilon}$ as being $\G_{ m\sqrt{1-\frac{\epsilon}{4}}}$ in our calculations, and hence the number of nodes in $\G_{m,\epsilon}$ is approximately $m^2\left(1-\frac{\epsilon}{4}\right)$. 
 
\begin{lem}
	Let $p_c$ be the critical probability for site percolation. For $p>p_c$, we have
	\begin{equation*}
	\lim\limits_{m\rightarrow \infty}\frac{1}{m^2}\E^\mathrm{o}\left[|\overline{\cR}_{k,n}(\G_m)| \right]=0
	\end{equation*}
	\label{lem:outpath}
\end{lem}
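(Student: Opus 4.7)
The plan is to reduce, via a union bound over the $n$ packets, to an estimate for a single percolation; then to localize the bad event by splitting $\G_m$ into the inner region $\G_{m,\epsilon}$ and the thin annulus $A_{m,\epsilon}:=\G_m\setminus\G_{m,\epsilon}$; and finally to control the inner contribution using standard supercritical percolation facts, letting $\epsilon \downarrow 0$ at the end. For a single site percolation, write $\cQ(\G_m)$ for the extended cluster of $\0$ in the induced subgraph on $\G_m$, i.e., the set of nodes of $\G_m$ having a $\G_m$-conduit from $\0$ (including open neighbors thereof). A node $u \in \overline{\cR}_{k,n}(\G_m)$ must, in at least one of the $n$ independent percolations, lie in $C_{\0}^+ \setminus \cQ(\G_m)$: the $\Z^2$-count of packets it receives is $\ge k$ while its $\G_m$-count is $<k$, so some packet is received only via outside paths. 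Since the $n$ percolations decouple under $\P^{\mathrm{o}}$,
\[
\E^{\mathrm{o}}\bigl[|\overline{\cR}_{k,n}(\G_m)|\bigr] \;\le\; n \cdot \E^{\0}\bigl[\bigl|(C_{\0}^+ \cap \G_m) \setminus \cQ(\G_m)\bigr|\bigr],
\]
where $\E^{\0}$ is expectation in a single percolation conditioned on $\0$ being open. It therefore suffices to show the right-hand side is $o(m^2)$.

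The annulus $A_{m,\epsilon}$ contributes at most $|A_{m,\epsilon}| \approx \tfrac{\epsilon}{4}m^2$, giving an additive $n\epsilon/4 + o(1)$ after normalization. For the interior, the key geometric observation is: if $u \in \G_{m,\epsilon}$ and $u \in C_{\0}^+ \setminus \cQ(\G_m)$, then $u$ (or an open neighbor) lies in $C_{\0}$ in $\Z^2$ but is disconnected from $\0$ inside $\G_m$; the connecting $\Z^2$-path must then leave and re-enter $\G_m$, forcing the open $\G_m$-cluster of $u$ to reach $\partial \G_m$, and hence to have diameter $\Omega(\epsilon m)$ since $u$ is at graph distance $\Omega(\epsilon m)$ from the boundary. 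Two cases then arise. First, if $u$ belongs to a finite $\Z^2$-cluster, that cluster has diameter $\Omega(\epsilon m)$; by the classical exponential decay of the finite-cluster radius in supercritical site percolation on $\Z^2$, this event has probability $e^{-\Omega(\epsilon m)}$ at a single $u$, so the sum over $\G_m$ is $m^2 e^{-\Omega(\epsilon m)} = o(m^2)$. Second, if $u$ lies in the (a.s.\ unique) IOC $C$, then $u$ and $\0$ lie in the same $\Z^2$-cluster but in different open clusters of the restricted percolation on $\G_m$; here one invokes the fact that for $p > p_c$ the largest open cluster of the $\G_m$-percolation captures all but $o(m^2)$ sites of $C \cap \G_m$, with $\0$ almost surely belonging to this giant component whenever $\0 \in C$. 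Combining, $\limsup_m \tfrac{1}{m^2}\E^{\mathrm{o}}[|\overline{\cR}_{k,n}(\G_m)|] \le n\epsilon/4$, and $\epsilon \downarrow 0$ concludes.

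The principal obstacle is the second (IOC-fragmentation) case: one needs the quantitative statement that restricting supercritical site percolation on $\Z^2$ to a large box $\G_m$ produces a single ``giant'' open cluster of size $\theta(p)m^2 + o(m^2)$ which contains almost all of $C \cap \G_m$. This is a standard but nontrivial consequence of the uniqueness of the IOC together with box-crossing estimates in the supercritical regime (see, e.g., \cite{grimmett}); without it, one can only conclude that the bad inner nodes live in atypically large $\G_m$-clusters disjoint from $\0$'s cluster, not that such clusters collectively occupy only $o(m^2)$ sites.
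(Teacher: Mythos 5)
Your reduction is genuinely different from the paper's and the first half of it is sound: the union bound over the $n$ packets (using that under $\P^{\mathrm{o}}$ the percolations decouple, so each term is a single percolation conditioned on $\0$ open) is a clean way to avoid the FKG step the paper needs, and the split into the annulus $\G_m\setminus\G_{m,\epsilon}$ plus interior, with the observation that an interior node of $C_{\0}^+\setminus\cQ(\G_m)$ forces an open $\G_m$-cluster of diameter $\Omega(\epsilon m)$ touching $\partial\G_m$ but not containing $\0$, is correct. The finite-cluster case is also fine (indeed simpler than you make it: the bad event there is a single event about $C_{\0}$ being finite with diameter $\Omega(\epsilon m)$, so DCT already suffices without exponential decay).

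The genuine gap is exactly where you flag it, and it is compounded by a false statement: ``with $\0$ almost surely belonging to this giant component whenever $\0\in C$'' does not hold for any fixed $m$ --- for finite $m$ there is positive probability that $\0\in C$ yet $\0$'s open cluster \emph{inside} $\G_m$ is separated from the bulk of $C\cap\G_m$, in which case essentially all of $C^+\cap\G_m$ (a $\Theta(m^2)$ set) lands in your bad set. So the second case needs a quantitative uniqueness statement, e.g.\ that with probability $1-o(1)$ there is at most one open cluster of $\G_m$ of diameter $\ge \epsilon m/16$; on that event your bad interior set is empty (both $\0$'s $\G_m$-cluster and the bad node's $\G_m$-cluster would be that unique large cluster, contradicting their disjointness), and off that event you bound crudely by $m^2$. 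This is provable but is precisely the nontrivial percolation input you have not supplied. The paper avoids the issue entirely with a lighter tool: if in a given percolation there is an open circuit around $\0$ in the annulus $\G_m\setminus\G_{m,\epsilon}$, then \emph{no} interior node can require an outside path (any conduit leaving $\G_m$ must cross the circuit, and can be rerouted along it), so the bad interior event implies the absence of such a circuit in some percolation; the probability of an open circuit in a supercritical annulus tends to $1$, and a single FKG application removes the conditioning on $\mathrm{O}$. That one annulus-circuit estimate handles both your cases at once, which is why I would either import it into your argument (it also yields the uniqueness-of-large-clusters fact you need) or switch to the paper's decomposition outright.
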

\begin{IEEEproof}
	Fix an $\epsilon>0$. We will find an $m_0$ such that $\frac{1}{m^2}\E^\mathrm{o}\left[\overline{\cR}_{k,n}(\G_m)\right]<\epsilon$ for all $m\ge m_0$. This will prove the lemma.
	\par 
	Any node in $\overline{\cR}_{k,n}(\G_m)$ has a conduit in at least $k$ out of the $n$ packet transmissions on $\Z^2$ and receives at least one packet from outside $\G_m$.
	Denote by $M_j$  the event that node $j$ receives at least one of the $n$  packets from outside $\G_m$. Recall that this means that node $j$ does not have any $\G_m$-conduit for this packet. We then have,
	\begin{align*}
	\E^{\mathrm{o}}  \left[\frac{| \overline{\cR}_{k,n}(\G_m) |}{m^2} \right] &\le \E^{\mathrm{o}}  \left[\frac{1}{m^2}\sum_{j\in\G_m} \mathds{1}_{M_j}  \right], \\
	&= \E^{\mathrm{o}}  \left[\frac{1}{m^2}\sum_{j\in\G_{m,\epsilon} } \mathds{1}_{M_j} \right] \\
	&\hspace{1.5cm}+ \E^{\mathrm{o}}  \left[\frac{1}{m^2}\sum_{j\in\G_m \backslash \G_{m,\epsilon}} \mathds{1}_{M_j} \right],
	\end{align*}
	where $\mathds{1}_{M_j}$ is the indicator random variable for the event $M_j$, i.e., $\mathds{1}_{M_j} =1$ if $M_j$ occurs, and $\mathds{1}_{M_j} = 0$ otherwise. Since there are $m^2-m^2\left(1-\frac{\epsilon}{4}\right)=\frac{m^2\epsilon}{4}$ nodes in $\G_m \backslash \G_{m,\epsilon}$, the latter term can be further bounded to obtain,
	\begin{align}	
	\E^{\mathrm{o}}  \left[\frac{| \overline{\cR}_{k,n}(\G_m) |}{m^2} \right] &\le \frac{1}{m^2}\sum_{j\in\G_{m,\epsilon}}\P^{\mathrm{o}}\left( M_j \right) +\frac{\epsilon}{4}.
	\label{Eq:outrec}
	\end{align}

	The summation above can be split over those nodes which are on the boundary of $\G_{m,\epsilon}$ and those in the interior. The former term contains at most $4m\sqrt{1-\epsilon/4}$ nodes. The latter term involves those nodes which receive at least one packet from outside $\G_m$. Hence, in at least one percolation, such nodes have a path from the origin as shown in Fig. \ref{Fig:annulus}. This, then implies that there cannot be an open loop in the annulus $\G_m \backslash \G_{m,\epsilon}$ as indicated by the dotted line in Fig. \ref{Fig:annulus}.  Let $K_m$ be the event that there is no open loop around the origin in the annulus $\G_m \backslash \G_{m,\epsilon}$ in at least one percolation. 
	\begin{figure}
		\centering
		\includegraphics[width=0.35\textwidth]{./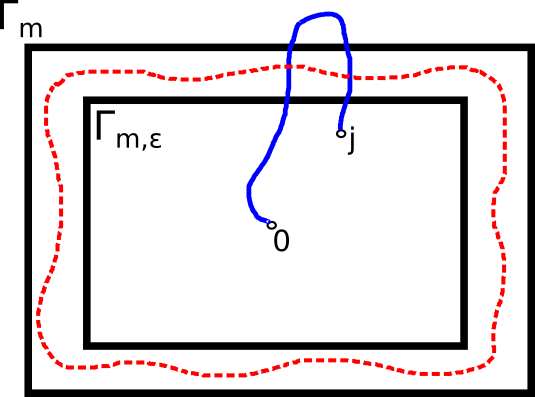}
		\caption{Illustration of open loop in the annulus $\G_m \backslash \G_{m,\epsilon}$. Here the vertex $j$ receives the packet from origin $0$, only along the path that is depicted.}
		\label{Fig:annulus}
	\end{figure}
	We then obtain,
	\begin{align}
	\frac{1}{m^2}\sum_{j\in\G_{m,\epsilon} }&\P^{\mathrm{o}}\left( M_j \right) \nonumber  \\
	&\le \frac{1}{m^2}\left[4m\sqrt{1-\frac{\epsilon}{4}} \ \right]+\left(1-\frac{\epsilon}{4}\right)\P^{\mathrm{o}}\left( K_m \right)\nonumber  \\
	&= \frac{4}{m}\sqrt{1-\frac{\epsilon}{4}}+\left(1-\frac{\epsilon}{4}\right)\left(1-\P^{\mathrm{o}}\left( K_m^c \right)\right).
	\label{Eq:annul}
	\end{align}
	The event $K_m^c$ is the event that there is an open loop in the annulus $\G_m \backslash \G_{m,\epsilon}$ in each of the $n$ percolations. Note that this is an increasing event and so is the event $\mathrm{O}$. Using the FKG inequality (see \cite[Chapter 2]{grimmett}), we have that
	\begin{align}
	\P^{\mathrm{o}}\left(K_m^c\right)\ &\ \ =\ \  \frac{\P\left(K_m^c \cap \mathrm{O}\right)}{\P\left(\mathrm{O}\right)}, \nonumber \\
	&\stackrel{(FKG)}{\ge}\frac{\P\left( K_m^c \right)\P\left( \mathrm{O}\right)}{\P\left(\mathrm{O}\right)},\nonumber  \\
	&\ \ =\ \ \P\left( K_m^c \right).
	\label{eq: fkg_trick}
	\end{align}
	On $\{0,1\}^{\mathbb{Z}^2}$, define $\mathsf{Ann}$ to be the event that there is an open loop in the annulus $\G_m \backslash \G_{m,\epsilon}$. Exploiting the independence of packet transmissions, we have that $\P\left( K_m^c \right)=\P_1\left( \mathsf{Ann} \right)^n$. Substituting \eqref{eq: fkg_trick} and \eqref{Eq:annul} in  \eqref{Eq:outrec}, and using this result, we obtain,
	\begin{align*}
	\E^{\mathrm{o}}  &\left[\frac{| \overline{\cR}_{k,n}(\G_m) |}{m^2} \right] \\
	&\le  \frac{4}{m}\sqrt{1-\frac{\epsilon}{4}}+\left(1-\frac{\epsilon}{4}\right)\left(1-\P_1\left( \mathsf{Ann} \right)^n\right)+\frac{\epsilon}{4}.
	\end{align*}
	For super-critical site percolation process on $\Z^2$ and a fixed $\epsilon>0$, the probability of an open loop in the annulus $\G_m \backslash \G_{m,\epsilon}$ around the origin is known to approach $1$ as $m\rightarrow \infty$ (see \cite{grimmett} for an idea of the proof, and \cite{russo1978note} for specific results for site percolation) i.e. $\P_1(\mathsf{Ann})\rightarrow 1$ as $m\rightarrow \infty$. Thus we can find an $m_0$ such that each of the first two terms on the RHS in the above expression are less than $\frac{\epsilon}{4}$ for all $m\ge m_0$. This is the required $m_0$. 
	\end{IEEEproof}

 
 \par Since $\cR_{k,n}(\Z^2) \cap \G_m$ is a disjoint union of nodes in $\cR_{k,n}(\Gamma_m)$ and $\overline{\cR}_{k,n}(\G_m)$, the previous lemma shows that 
 \begin{equation}
 \lim_{m \to \infty} \frac{1}{m^2} \E^\mathrm{o}\big[|\cR_{k,n}(\G_m)|\big] = \lim_{m \to \infty} \frac{1}{m^2} \E^\mathrm{o}\big[|\cR_{k,n}(\Z^2) \cap \G_m|\big],
 \label{eq:Rkn_Gm_Z2}
 \end{equation}
This provides us with a mapping between the probabilistic forwarding mechanism on a large (but finite) grid $\G_m$ and the infinite lattice $\Z^2$.

	In our analysis on the grid, we will be interested in the expected value of $|\cR_{k,n}(\Z^2) \cap \G_m) |$ when conditioned on the event $A_T^+$, defined, for any $T\subset [n]$, as the event that the origin	is in the IEC in exactly the percolations indexed by $T$. As a corollary of Lemma \ref{lem:outpath}, we also obtain
\begin{cor}
	Let $p_c$ be the critical probability for site percolation. For $p>p_c$, we have
	\begin{equation*}
	\lim\limits_{m\rightarrow \infty}\frac{1}{m^2}\E^\mathrm{o}\left[\overline{\cR}_{k,n}(\G_m) \big| A_T^+\right]=0.
	\end{equation*}
	\label{cor:outpath}
\end{cor}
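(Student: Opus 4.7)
The plan is to derive this corollary from Lemma~\ref{lem:outpath} by a routine law-of-total-expectation argument, exploiting the fact that each conditioning event $A_T^+$ has positive probability that does not depend on the grid size $m$.

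First, I would observe that for each $T \subset [n]$, the event $A_T^+$ depends only on the $n$ site percolation processes on $\Z^2$ and not on $m$ at all. Under the measure $\P^{\mathrm{o}}$ (conditioning on the origin being open in all $n$ percolations) the $n$ percolations remain mutually independent. Within each single percolation $i$, the event $\{\0 \in C_i^+\}$ is determined by the states of the sites \emph{other} than the origin (as noted in the proof of Lemma~\ref{lem:theta}), so it is independent of the state of $\0$. Consequently $\P^{\mathrm{o}}(\0 \in C_i^+) = \P(\0 \in C_i^+) = \theta^+(p)$, and by independence across percolations
\begin{equation*}
\P^{\mathrm{o}}(A_T^+) \ = \ (\theta^+(p))^{|T|} (1-\theta^+(p))^{n-|T|}.
\end{equation*}
For $p_c < p < 1$ we have $0 < \theta^+(p) < 1$ (the upper bound holds because with probability $(1-p)^4 > 0$ all four neighbors of $\0$ are closed, forcing $\0 \notin C^+$), so $\P^{\mathrm{o}}(A_T^+)$ is a strictly positive constant independent of $m$.

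Next, since $\{A_T^+ : T \subset [n]\}$ partitions the sample space, the law of total expectation yields
\begin{equation*}
\E^{\mathrm{o}}\!\left[\frac{|\overline{\cR}_{k,n}(\G_m)|}{m^2}\right] \ = \ \sum_{T \subset [n]} \P^{\mathrm{o}}(A_T^+) \, \E^{\mathrm{o}}\!\left[\frac{|\overline{\cR}_{k,n}(\G_m)|}{m^2} \,\bigg|\, A_T^+\right].
\end{equation*}
Lemma~\ref{lem:outpath} tells us that the left-hand side tends to $0$ as $m \to \infty$. Every term on the right is non-negative, and each coefficient $\P^{\mathrm{o}}(A_T^+)$ is a strictly positive constant independent of $m$. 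Hence every conditional expectation must individually tend to $0$, which is exactly the claim of the corollary.

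There is no substantive obstacle here; all the real work is done in Lemma~\ref{lem:outpath}. The only item demanding any care is the verification that $\P^{\mathrm{o}}(A_T^+) > 0$ for every $T$ (otherwise the total-expectation decomposition would not force each conditional expectation to vanish), and this reduces to the two observations that $\theta^+(p) > 0$ for $p > p_c$ and $\theta^+(p) < 1$ for $p < 1$.
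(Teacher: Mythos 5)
Your proof is correct, and it takes a genuinely different --- and arguably cleaner --- route than the paper's. The paper proves the corollary by re-running the whole annulus/open-loop argument of Lemma~\ref{lem:outpath} with the extra conditioning on $A_T^+$ inserted, and then removes that conditioning via an FKG step that treats $A_T^+ \cap \mathrm{O}$ as an increasing event (a claim that is delicate for $T \neq [n]$, since ``the origin is \emph{not} in the IEC in percolation $j$'' is a decreasing event). You instead deduce the conditional statement directly from the unconditional one: the events $\{A_T^+ : T \subseteq [n]\}$ form an a.s.\ partition, each has $\P^{\mathrm{o}}$-probability $(\theta^+(p))^{|T|}(1-\theta^+(p))^{n-|T|}$ (this is exactly Proposition~\ref{prop:formulap} of the paper, which you rederive correctly via the independence of $\{\0 \in C^+\}$ from the state of $\0$), and this probability is a strictly positive constant independent of $m$ because $0 < \theta^+(p) < 1$ for $p_c < p < 1$. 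Each nonnegative term of the total-expectation decomposition is therefore dominated by the unconditional expectation, which vanishes by Lemma~\ref{lem:outpath}, forcing each conditional expectation to vanish. The only inputs you need are the lemma and the positivity of $\P^{\mathrm{o}}(A_T^+)$, so your argument avoids all percolation-specific machinery and in particular sidesteps the questionable monotonicity claim in the paper's version; what it costs is nothing beyond the (already needed) formula for $\P^{\mathrm{o}}(A_T^+)$. Two cosmetic points: the sum should explicitly run over all $T \subseteq [n]$, including $T = [n]$ and $T = \emptyset$, and the degenerate endpoint $p=1$ (where $A_T^+$ is null for $T \neq [n]$) should be excluded, as you implicitly do by restricting to $p<1$.
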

\begin{IEEEproof}
	The proof is along similar lines as that of Lemma \ref{lem:outpath} but with additional conditioning on the event $A_T^+$. More specifically,  \eqref{Eq:annul} would have $	\P^{\mathrm{o}}\left\{ K_m^c \ \big| \ A_T^+\right\}$ on the RHS. Notice that $A_T^+$ is an increasing event and hence $\mathrm{O} \cap A_T^+$ is also increasing. Thus,
	\begin{align}
	\P^{\mathrm{o}}\left( K_m^c \ \big| \ A_T^+\right)&= \frac{\P\left( K_m^c \cap A_T^+ \cap \mathrm{O}\right)}{\P\left(A_T^+ \cap \mathrm{O}\right)} \nonumber \\
	&\stackrel{(FKG)}{\ge}\frac{\P\left( K_m^c \right)\P\left( A_T^+ \cap  \mathrm{O}\right)}{\P\left( A_T^+ \cap \mathrm{O}\right)}\nonumber  \\
	&=\P\left( K_m^c \right).
	\label{eq: fkg_trick_cor}
	\end{align}
	Using this in  \eqref{Eq:annul} and following subsequent steps from the lemma, we get the statement of the corollary.
\end{IEEEproof}

It is to be justified that such conditioning can indeed be done, i.e., the event $A_T^+$ has a positive probability for the specified range of values of $p$. The following proposition relates the probability of the event $A^+_T$, conditioned on the event that the origin is open in all $n$ percolations, to $\theta^+(p)$.
\begin{prop} For any $T \subseteq [n]$ with $|T| = t$, we have
	$$\P^{\mathrm{o}}(A^+_T)\ =\ (\theta^+(p))^{t}(1-\theta^+(p))^{n-t}.$$
	\label{prop:formulap}
\end{prop}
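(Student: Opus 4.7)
The plan is to exploit two layers of independence: first, the $n$ percolations are mutually independent, and second, within a single percolation, the event that the origin lies in the IEC depends only on the states of the non-origin sites, so it is independent of the event that the origin is open. The second point was already observed in the proof of Lemma~\ref{lem:theta} and is what gave us the identity $\theta^+(p) = \theta(p)/p$; here we just need to deploy it $n$ times.

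Concretely, I would index the percolations by $i \in [n]$, let $\mathrm{O}_i$ denote the event that the origin is open in the $i$th percolation, and let $B_i^+$ denote the event that the origin lies in the IEC of the $i$th percolation. Then $\mathrm{O} = \bigcap_i \mathrm{O}_i$ and $A_T^+ = \bigcap_{i \in T} B_i^+ \cap \bigcap_{i \notin T} (B_i^+)^c$. Since the percolations are independent,
\begin{equation*}
\P(A_T^+ \cap \mathrm{O}) \ = \ \prod_{i \in T} \P(B_i^+ \cap \mathrm{O}_i) \, \prod_{i \notin T} \P((B_i^+)^c \cap \mathrm{O}_i).
\end{equation*}

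Next I would invoke the within-percolation independence: $B_i^+$ is measurable with respect to $\{X_u : u \ne \0\}$, while $\mathrm{O}_i$ depends only on $X_{\0}$, so the two are independent under $\P_1$. Hence $\P(B_i^+ \cap \mathrm{O}_i) = \theta^+(p)\,p$ and $\P((B_i^+)^c \cap \mathrm{O}_i) = (1-\theta^+(p))\,p$. Substituting gives $\P(A_T^+ \cap \mathrm{O}) = p^n\,(\theta^+(p))^t\,(1-\theta^+(p))^{n-t}$. Dividing by $\P(\mathrm{O}) = p^n$ yields the claimed formula.

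There is no real obstacle here; the only thing to be careful about is to keep the two kinds of independence cleanly separated and to verify that $B_i^+$ is indeed determined by the non-origin coordinates (which follows from the definition of the extended cluster: $\0 \in C_{\0}^+$ iff either $\0$ lies in an IOC or one of the four neighbours of $\0$ does, and membership of any vertex $v \ne \0$ in an IOC is determined by $\{X_u : u \ne \0\}$ since any infinite open path from $v$ can be taken to avoid $\0$).
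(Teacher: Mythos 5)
Your proof is correct and follows essentially the same route as the paper: both arguments factor over the $n$ independent percolations and both rest on the observation from Lemma~\ref{lem:theta} that IEC-membership of the origin is determined by the non-origin sites. The only cosmetic difference is that you work with $\theta^+(p)$ directly via that independence, whereas the paper first replaces $A_T^+\cap\mathrm{O}$ by $A_T\cap\mathrm{O}$, computes with $\theta(p)$ and $p-\theta(p)$, and converts at the end using $\theta^+(p)=\theta(p)/p$.
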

\begin{IEEEproof}
By definition, $\P^{\mathrm{o}}(A_T^+) = \P(A_T^+ \mid \mathrm{O})$. Note that, in a given percolation, conditioned on $\0$ being open, the event $\{\0 \text{ is in the IEC}\}$ is the same as the event $\{\0 \text{ is in the IOC}\}$. Consequently, conditioned on O, the event $A_T^+$ is the same as the event, $A_T$, that the origin is in the IOC in exactly the percolations indexed by $T$. Hence, 
$$\P^{\mathrm{o}}(A_T^+) = \P(A_T \mid \mathrm{O}) = \frac{\P(A_T \cap \mathrm{O})}{\P(\mathrm{O})}.$$
The denominator equals $p^n$. The numerator is the event that the origin is in the IOC in exactly the percolations indexed by $T$, and is open but in a finite cluster in the remaining $n-|T|$ percolations. In a given percolation, the probability that the origin is open but in a finite cluster is $p-\theta(p)$. Thus, we have $\P(A_T \cap \mathrm{O})= (\theta(p))^{|T|}(p-\theta(p))^{n-|T|}$. The result now follows from the fact (Lemma~\ref{lem:theta}) that $\theta^+(p) = \frac{\theta(p)}{p}$.
\end{IEEEproof}


Since $\theta^+(p)>0$ for $p>p_c$, we have that $\P^{\mathrm{o}}(A^+_T)>0$ as well.

We now state an ergodic theorem for $n$ independent copies of the site percolation process on $\Z^2$, which will aid us in analyzing $\left|\cR_{k,n}(\Z^2) \cap \G_m\right|$. For this, let $C_{k,n}^+$ be the set of all sites in $\Z^2$ that belong to the IEC in at least $k$ out of $n$ independent percolations. By a simple application of standard ergodic theorems as detailed in Section \ref{sec:ergthms} of the appendix, we have the following theorem.
\begin{thm} We have
	$$\lim_{m \rightarrow \infty}\frac{1}{m^2} |C_{k,n}^+ \cap \G_m| = \theta^+_{k,n}(p) \ \ \ \ \ \ \P\text{-a.s.}$$
	where 
	$$
	\theta^+_{k,n}(p) =  \sum_{j = k}^n \binom{n}{j} (\theta^+(p))^j (1-\theta^+(p))^{n-j}
	$$
	is the probability that the origin belongs to the IEC in at least $k$ out of the $n$ percolations.
	\label{thm:prodtheta}
\end{thm}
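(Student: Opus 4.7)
The plan is to recognize $|C_{k,n}^+ \cap \G_m|$ as a partial sum of a stationary bounded random field over $\Z^2$ and apply the multidimensional ergodic theorem invoked in Section~\ref{sec:ergthms} (Newman's Proposition~8), paralleling the proof of Theorem~\ref{thm:theta} but on a lifted probability space that carries all $n$ percolations simultaneously.

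First, I would work on the product of $n$ independent site percolation configurations on $\Z^2$. Equivalently, consider the i.i.d.\ random field $(\mathbf{X}_u)_{u \in \Z^2}$ taking values in $\{0,1\}^n$, where the $n$ coordinates of each $\mathbf{X}_u$ are independent $\mathrm{Ber}(p)$. This is an i.i.d.\ field on $\Z^2$ with a finite alphabet, hence stationary and (strongly) mixing under the natural $\Z^2$-shift action $\tau_v:(\mathbf{X}_u)_u \mapsto (\mathbf{X}_{u+v})_u$; in particular, ergodic.

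Next, for each $u \in \Z^2$, define
$$Y_u \ := \ \mathds{1}\{u \text{ lies in the IEC in at least } k \text{ of the } n \text{ percolations}\}.$$
Membership of a given site in the IEC of a given percolation is a shift-equivariant functional of the configuration, so there is a single measurable $f:(\{0,1\}^n)^{\Z^2}\to\{0,1\}$ with $Y_u = f\circ\tau_u$. Hence $(Y_u)_{u\in\Z^2}$ is a bounded stationary random field under $(\tau_v)$, and by construction
$$|C_{k,n}^+ \cap \G_m| \ = \ \sum_{u \in \G_m} Y_u.$$
Applying the multidimensional ergodic theorem to $(Y_u)$ over the expanding square boxes $\G_m$ yields, $\P$-a.s.,
$$\lim_{m\to\infty} \frac{1}{m^2}\, |C_{k,n}^+ \cap \G_m| \ = \ \E[Y_0].$$

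Finally, I would compute $\E[Y_0]$. The events $E_i := \{\0 \text{ is in the IEC of percolation } i\}$, for $i=1,\ldots,n$, depend on disjoint coordinates of $(\mathbf{X}_u)_u$, hence are mutually independent, each with probability $\theta^+(p)$. The number of $E_i$'s that occur is therefore $\mathrm{Bin}(n,\theta^+(p))$, giving
$$\E[Y_0] \ = \ \P(\mathrm{Bin}(n,\theta^+(p)) \ge k) \ = \ \sum_{j=k}^{n}\binom{n}{j}(\theta^+(p))^j(1-\theta^+(p))^{n-j} \ = \ \theta^+_{k,n}(p),$$
which is exactly the claim. The only step that warrants any care is the ergodicity of the lifted $\Z^2$-action on the $n$-fold product, but this is immediate since i.i.d.\ $\Z^2$-fields are Bernoulli shifts and products of mixing $\Z^2$-actions remain mixing; all remaining steps are bookkeeping.
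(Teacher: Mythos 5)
Your proposal is correct and follows essentially the same route as the paper: lift to the i.i.d.\ field on $\Z^2$ with alphabet $\{0,1\}^n$, apply the multidimensional pointwise ergodic theorem (Theorem~\ref{thm:tempelman}) to the indicator that a site is in the IEC in at least $k$ of the $n$ percolations, and evaluate the resulting mean as a $\mathrm{Bin}(n,\theta^+(p))$ tail probability by independence of the $n$ percolations. The extra remarks on mixing are unnecessary but harmless, since the lifted field is itself i.i.d.\ with a finite alphabet and the theorem applies directly.
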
 
From the theorem, we derive a useful fact that plays a key role in our analysis. Since the event, say $A_n$, that the origin is in the IOC in all $n$ percolations has positive probability ($\theta(p)^n > 0$ for $p > p_c$), the theorem statement also holds almost surely when conditioned on $A_n$. Hence, by the DCT, we also have

\begin{cor}
\begin{equation*}
\lim_{m \rightarrow \infty} \E\left[\frac{1}{m^2} |C_{k,n}^+ \cap \G_m| \ \bigg| \ A_n \right] = \theta^+_{k,n}(p) \, .
\end{equation*}
\label{cor:EgivenAn}
\end{cor}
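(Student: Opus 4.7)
The plan is to derive the corollary directly from Theorem~\ref{thm:prodtheta} by combining two routine observations: that the almost-sure convergence statement persists under conditioning on $A_n$, and that a trivial deterministic bound justifies swapping the limit and the conditional expectation via the dominated convergence theorem.

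For the first step, I would note that for $p > p_c$ we have $\P(A_n) = \theta(p)^n > 0$, by independence of the $n$ percolations and the definition of $\theta(p)$. Hence the conditional probability measure $\P(\cdot \mid A_n)$ is well-defined and absolutely continuous with respect to $\P$, with bounded Radon--Nikodym derivative $\mathds{1}_{A_n}/\P(A_n)$. Consequently every $\P$-null set is also a $\P(\cdot \mid A_n)$-null set, so the almost-sure convergence asserted in Theorem~\ref{thm:prodtheta} continues to hold $\P(\cdot \mid A_n)$-almost surely.

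For the second step, I would invoke the deterministic bound $0 \le \frac{1}{m^2}|C_{k,n}^+ \cap \G_m| \le 1$, valid because $|C_{k,n}^+ \cap \G_m| \le |\G_m| = m^2$. Since the constant function $1$ is integrable under $\P(\cdot \mid A_n)$, the dominated convergence theorem applies and yields
$$\lim_{m \to \infty} \E\!\left[\frac{1}{m^2}\,|C_{k,n}^+ \cap \G_m| \ \bigg| \ A_n \right] \ = \ \theta^+_{k,n}(p),$$
since the almost-sure limit is a deterministic constant. There is no genuine obstacle in this argument; the only point requiring any care is confirming $\P(A_n) > 0$, which is precisely where the supercritical hypothesis $p > p_c$ enters.
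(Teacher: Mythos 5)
Your proof is correct and follows essentially the same route as the paper: observe that $\P(A_n) = \theta(p)^n > 0$ for $p > p_c$, so the almost-sure convergence of Theorem~\ref{thm:prodtheta} survives conditioning on $A_n$, and then pass the limit through the conditional expectation via the dominated convergence theorem using the bound $\frac{1}{m^2}|C_{k,n}^+ \cap \G_m| \le 1$. The only difference is that you spell out the absolute-continuity and domination details that the paper leaves implicit.
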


We are now in a position to prove Theorem~\ref{thm:rkn}, which is restated below for convenience. The proof is obtained by carefully relating $\cR_{k,n}(\Z^2)$ to the set $C_{k,n}^+$, and then using Corollary \ref{cor:EgivenAn}.

\begin{thm}[Restatement of Theorem~\ref{thm:rkn}]
	For $p>p_c$, we have
	\begin{align*}
	\lim_{m\rightarrow\infty}  \E^{\mathrm{o}}&\left[\frac{|\cR_{k,n}(\Gamma_m)|}{m^2} \right] \ = \ \\
	&\sum_{t=k}^{n} \sum_{j=k}^{t}\binom{n}{t}\binom{t}{j}(\theta^+(p))^{t+j}(1-\theta^+(p))^{n-j}.
	\end{align*}
	Equivalently,
	\begin{equation}
	\lim_{m\rightarrow\infty}  \E^{\mathrm{o}}\left[\frac{|\cR_{k,n}(\Gamma_m)|}{m^2} \right] \ = \ \P(Y\ge k),
	\label{eq:pkndelta_grid}
	\end{equation}
	where $Y \sim \mathrm{Bin}(n,(\theta^+(p))^2)$.
	\label{thm:rkn-repeat}
\end{thm}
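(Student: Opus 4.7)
My plan is to reduce the computation to the multi-percolation ergodic limit (Corollary \ref{cor:EgivenAn}) by partitioning based on which percolations have the origin in the infinite extended cluster (IEC). First, equation \eqref{eq:Rkn_Gm_Z2} replaces $|\cR_{k,n}(\Gamma_m)|$ by $|\cR_{k,n}(\Z^2) \cap \G_m|$ inside the limit. Under $\P^{\mathrm{o}}$, the events $\{A_T^+ : T \subseteq [n]\}$ form a finite partition with $\P^{\mathrm{o}}(A_T^+) = (\theta^+(p))^{|T|}(1-\theta^+(p))^{n-|T|}$ by Proposition \ref{prop:formulap}, so it suffices to evaluate $\lim_{m \to \infty}\E^{\mathrm{o}}[|\cR_{k,n}(\Z^2) \cap \G_m|/m^2 \mid A_T^+]$ for each $T$ and then sum.

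The key observation is that, under $A_T^+$, the cluster $C_{\0}(\text{perc } i)$ equals the unique IOC for $i \in T$ and is almost surely finite for $i \notin T$. Writing $t := |T|$, let $N_T$ denote the set of sites in $\Z^2$ that lie in the IEC in at least $k$ of the $t$ percolations indexed by $T$, and let $F := \bigcup_{i \notin T} C_{\0}^+(\text{perc } i)$. Then $F$ has almost surely finite cardinality. Any $v \in (\cR_{k,n}(\Z^2) \cap \G_m) \setminus (N_T \cap \G_m)$ must lie in $F$, while $N_T \cap \G_m \subseteq \cR_{k,n}(\Z^2) \cap \G_m$, so $\bigl||\cR_{k,n}(\Z^2) \cap \G_m| - |N_T \cap \G_m|\bigr| \le |F|$. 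Since $|F|/m^2 \to 0$ a.s.\ and the normalized quantities are bounded by $1$, DCT gives that the two normalized counts have the same $\P^{\mathrm{o}}(\cdot \mid A_T^+)$-expectation limit.

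If $t < k$ then $N_T = \emptyset$ and the limit is $0$. If $t \ge k$, observe that $\mathrm{O} \cap A_T^+$ decomposes as an intersection of per-percolation events, so conditional on $A_T^+$ (and $\mathrm{O}$) the $n$ percolations remain mutually independent, with each percolation indexed by $T$ distributed as a percolation conditioned on $\{\0 \in \text{IOC}\}$. Applying Corollary \ref{cor:EgivenAn} to these $t$ percolations (its hypothesis holds since $\theta(p) > 0$ for $p > p_c$) yields
$$\lim_{m \to \infty}\E^{\mathrm{o}}\!\left[\tfrac{1}{m^2}|N_T \cap \G_m| \,\Big|\, A_T^+\right] = \sum_{j=k}^t \binom{t}{j}(\theta^+(p))^j(1-\theta^+(p))^{t-j}.$$
Multiplying by $\P^{\mathrm{o}}(A_T^+)$, summing over all $T$, and grouping by $t = |T|$ (introducing a factor $\binom{n}{t}$) produces the claimed double sum. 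The equivalent form $\P(Y \ge k)$ with $Y \sim \mathrm{Bin}(n,(\theta^+(p))^2)$ follows by interchanging the order of summation and collapsing the inner sum via the binomial theorem, using the identity $\binom{n}{t}\binom{t}{j} = \binom{n}{j}\binom{n-j}{t-j}$.

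The main obstacle is justifying the ergodic limit for $N_T$ under $\P^{\mathrm{o}}(\cdot \mid A_T^+)$ rather than under the unconditional i.i.d.\ product measure used in Theorem \ref{thm:prodtheta}. The resolution is that $\mathrm{O} \cap A_T^+$ is a product of per-percolation events whose marginal probabilities are positive for $p > p_c$, so the conditional measure remains a product measure across percolations, the almost-sure ergodic statement transfers, and DCT applied to the bounded normalized integrand yields convergence of conditional expectations.
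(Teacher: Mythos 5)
Your proof is correct and follows essentially the same route as the paper: partition on the events $A_T^+$, discard the a.s.-finite contribution coming from percolations outside $T$ (your set $F$ plays the role of the paper's $\cQ$ and of the $|T|<k$ case), and invoke Corollary~\ref{cor:EgivenAn} for the $t$ percolations indexed by $T$ after noting that conditioning on $\mathrm{O}\cap A_T^+$ factorizes across percolations. The only cosmetic differences are that you perform the finite-grid-to-$\Z^2$ transfer once up front via \eqref{eq:Rkn_Gm_Z2} rather than per conditioning event via Corollary~\ref{cor:outpath}, and you obtain the $\P(Y\ge k)$ form by a combinatorial identity instead of the probabilistic construction in Proposition~\ref{prop:simpexpr}.
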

\begin{IEEEproof}
	Before we begin, recall from \eqref{eq:pathunion} that $\cR_{k,n}(\Gamma_m)$ and $\overline{\cR}_{k,n}(\G_m) $ form a partition of $\cR_{k,n}(\Z^2) \cap \G_m $.	In the framework of $n$ independent site percolations, $\cR_{k,n}(\Z^2)$ is the set of sites in $\Z^2$ that are in the extended cluster containing the origin in at least $k$ of the $n$ percolations (conditioned on the origin being open). 
	
	We start with
	\begin{align}
	\E^{\mathrm{o}} \left[|\cR_{k,n}(\Gamma_m)|\right] & =  \ \sum_{t = 0}^n \sum_{T \subseteq [n]: \atop |T|=t}  \E^{\mathrm{o}}  \left[|\cR_{k,n}(\Gamma_m)| \ \big| \ A_T^+\right] \, \P^{\mathrm{o}}(A_T^+).
	\label{eq:Eo_sum}
	\end{align}
 	Our approach in the ensuing discussion would be to first obtain results for $\cR_{k,n}(\Z^2) \cap \G_m$, and then transfer them to $\cR_{k,n}(\Gamma_m)$. Motivated by our discussion following Lemma \ref{lem:outpath}, consider the summand of  \eqref{eq:Eo_sum} with $\cR_{k,n}(\Gamma_m)$ replaced by $\cR_{k,n}(\Z^2) \cap \G_m$, i.e., $\E^{\mathrm{o}}  \left[|\cR_{k,n}(\Z^2) \cap \G_m| \ \big| \ A_T^+\right]$. 
	\par Suppose that $|T|=t<k$. Given $A_T^+$, the origin is in the IEC in no more than $k-1$ of the percolations; hence, each site in $\cR_{k,n}(\Z^2)$ must belong to the finite cluster, denoted by $C_{\0}[j]$, in the $j$th percolation, for some $j \notin T$. As a result, given $A_T^+$, $\cR_{k,n}(\Z^2)$ is contained in the union $\cup_{j \notin T} C_{\0}[j]$, which is finite $\P^{\mathrm{o}}$-a.s, so that ${\displaystyle \lim_{m \to \infty}} \frac{1}{m^2} |\cR_{k,n}(\Z^2) \cap \G_m| = 0$ \ $\P^{\mathrm{o}}$-a.s.. Since $\cR_{k,n}(\Gamma_m) \subseteq \cR_{k,n}(\Z^2) \cap \G_m$, we also obtain ${\displaystyle \lim_{m \to \infty}} \frac{R_{k,n}(\Gamma_m)}{m^2} = 0$ \ $\P^{\mathrm{o}}$-a.s.. Consequently, by the DCT, we have for any $T \subseteq[n]$ with $|T| < k$,
	\begin{align}
	{\displaystyle \lim_{m \to \infty}} \E^{\mathrm{o}}  &\left[\frac{1}{m^2}|\cR_{k,n}(\Z^2) \cap \G_m| \ \bigg| \ A_T^+\right] = 0 \hspace{1cm} \nonumber\\
	& \text{ and } \hspace{1cm} \lim_{m \to \infty} \E^{\mathrm{o}}  \left[\frac{|\cR_{k,n}(\Gamma_m)|}{m^2} \bigg| \ A_T^+\right]=0.
	\label{eq:t<k}
	\end{align}
	
	Next, consider any summand in \eqref{eq:Eo_sum} with $|T| = t \ge k$  and $\cR_{k,n}(\Gamma_m)$ replaced by $\cR_{k,n}(\Z^2) \cap \G_m$ as before. The sites in $\cR_{k,n}(\Z^2)$ can be exactly one of two types: those that belong to the extended cluster $C_{\0}^+$ in at least $k$ of the percolations indexed by $T$; and those that do not. Let $\cR_{k,T}$ be the subset of $\cR_{k,n}(\Z^2)$ consisting of sites of the first type, and let $\mathcal{Q} = \cR_{k,n}(\Z^2) \setminus \cR_{k,T}$. Thus,  
	\begin{align}
	\E^{\mathrm{o}} & \left[|\cR_{k,n}(\Z^2) \cap \G_m| \ \big| \ A_T^+\right]  \notag \\ 
	& \!\!\!\!\! = \ \E^{\mathrm{o}}\left[|\cR_{k,T} \cap \G_m| \ \big| \ A_T^+\right] + \E^{\mathrm{o}}\left[|\cQ \cap \G_m| \ \big| \ A_T^+\right].
	\label{eq:EoRkninfty}
	\end{align}
	
	Note that any site in $\cQ$ must belong to $C_{\0}^+$ in at least one percolation outside of $T$. In particular, given $A_T^+$, $\cQ$ is $\P^{\mathrm{o}}$-a.s.\ finite. Thus, arguing as in the $|T| < k$ case, we have 
	\begin{align}
	\lim_{m \to \infty} \E^{\mathrm{o}}  &\left[\frac{1}{m^2}| \cQ \cap \G_m| \ \big| \ A_T^+\right] = 0 \hspace{1cm} \nonumber\\
	&\text{and} \hspace{1cm} \lim_{m \to \infty} \E^{\mathrm{o}}  \left[\frac{|\cR_{k,n}(\Gamma_m) \cap \cQ|}{m^2} \ \bigg| \ A_T^+\right] = 0.
	\label{eq:Q}
	\end{align}
	
	Finally, note that 
	\begin{align*}
	\E^{\mathrm{o}}\left[|\cR_{k,T} \cap \G_m| \ \big| \ A_T^+\right] 
	& = \E\left[|\cR_{k,T} \cap \G_m| \ \big| \ A_T^+ \cap \mathrm{O}\right] \\
	& \stackrel{(a)}{=} \E\left[|C_{k,T}^+ \cap \G_m| \ \big| \ A_T^+ \cap \mathrm{O} \right] \\
	& \stackrel{(b)}{=} \E\left[|C_{k,T}^+ \cap \G_m| \ \big| \ A_T\right],
	\end{align*}
	where $A_T$ is the event that $\0$ is in the IOC in exactly the percolations indexed by $T$, and $C_{k,T}^+$ is the set of sites of $\Z^2$ that belong to the IEC in at least $k$ of the percolations indexed by $T$. The equality labeled~(a) above is due to the fact that, conditioned on $A_T^+ \cap O$, $\cR_{k,T} = C_{k,T}^+$. The equality labeled~(b) is because $A_T^+ \cap \mathrm{O} = A_T \cap \mathrm{O}$, and moreover, the event that $\0$ is open in the percolations outside $T$ is independent of the percolations indexed by $T$. 
	
	Thus, restricting our attention to \emph{only} the percolations indexed by $T$, we can apply Corollary \eqref{cor:EgivenAn} with $n = t$ to obtain ${\displaystyle \lim_{m \to \infty}} \E\left[\frac{1}{m^2}|C_{k,T}^+ \cap \G_m| \ \big| \ A_T\right] = \theta_{k,t}^+(p)$. Hence,
	\begin{equation}
	\lim_{m \to \infty} \E^{\mathrm{o}}  \left[\frac{1}{m^2}| \cR_{k,T}  \cap \G_m| \ \big| \ A_T^+\right] =  \theta_{k,t}^+(p).
	\label{eq:Eo:last}
	\end{equation}
	Now using \eqref{eq:pathunion} and the fact that $\cR_{k,T} \subset \cR_{k,n}(\Z^2)$, we obtain 
	\begin{align*}
	\cR_{k,T} \cap \G_m &= \cR_{k,T} \cap \cR_{k,n}(\Z^2) \cap \G_m \\
	&= \cR_{k,T}  \cap (\cR_{k,n}(\Gamma_m) \cup \overline{\cR}_{k,n}(\G_m) ) \\
	&= ( \cR_{k,T} \cap \cR_{k,n}(\Gamma_m)) \cup ( \cR_{k,T} \cap \overline{\cR}_{k,n}(\G_m)),
	\end{align*}
	in which the two sets $\cR_{k,T} \cap \cR_{k,n}(\Gamma_m)$  and $\cR_{k,T} \cap \overline{\cR}_{k,n}(\G_m)$ on the RHS are disjoint (from \eqref{eq:pathunion}). Using this, we can write the expectation term in \eqref{eq:Eo:last} as follows
	\begin{align}
	\E^{\mathrm{o}} &\left[ \frac{1}{m^2}| \cR_{k,T}  \cap \G_m| \ \big| \ A_T^+ \right] = \nonumber\\ 
	&\hspace{1cm} \E^{\mathrm{o}}  \left[\frac{1}{m^2}| \cR_{k,n}(\Gamma_m) \cap \cR_{k,T} | \ \bigg| \ A_T^+\right]+ \nonumber\\
	&\hspace{2cm} \E^{\mathrm{o}}  \left[\frac{1}{m^2}| \overline{\cR}_{k,n}(\G_m) \cap \cR_{k,T}| \ \bigg| \ A_T^+\right].
	\label{eq:rknkt}
	\end{align}
	Using Lemma \ref{lem:outpath}, we have that
	\begin{align}
	\lim\limits_{m\rightarrow \infty} \E^{\mathrm{o}}  &\left[\frac{1}{m^2}| \overline{\cR}_{k,n}(\G_m) \cap \cR_{k,T}| \ \bigg| \ A_T^+\right] \le \nonumber\\
	&\hspace{1cm} \lim\limits_{m\rightarrow \infty}\E^{\mathrm{o}}  \left[\frac{| \overline{\cR}_{k,n}(\G_m)|}{m^2} \ \bigg| \ A_T^+\right]=0
	\label{eq:Eo:extra}
	\end{align}
	Substituting \eqref{eq:rknkt} in \eqref{eq:Eo:last}, and using \eqref{eq:Eo:extra}, we get
	\begin{align}
	\theta_{k,t}^+(p) &=  \lim_{m \to \infty} \E^{\mathrm{o}}  \left[\frac{1}{m^2}| \cR_{k,T}  \cap \G_m| \ \big| \ A_T^+\right] \nonumber \\
	&= \lim_{m \to \infty}  \E^{\mathrm{o}}  \left[\frac{1}{m^2}| \cR_{k,n}(\Gamma_m) \cap \cR_{k,T} | \ \bigg| \ A_T^+\right] \nonumber \\
	&\stackrel{\text{(a)}}{=} \lim_{m \to \infty}  \E^{\mathrm{o}}  \left[\frac{1}{m^2}| \cR_{k,n}(\Gamma_m) \cap \cR_{k,T} | \ \bigg| \ A_T^+\right] \nonumber\\
	&\hspace{2cm}+ \lim_{m \to \infty} \E^{\mathrm{o}}  \left[\frac{|\cR_{k,n}(\Gamma_m) \cap \cQ|}{m^2} \ \bigg| \ A_T^+\right]  \nonumber\\
	&= \lim_{m \to \infty}  \E^{\mathrm{o}}  \left[\frac{| \cR_{k,n}(\Gamma_m) |}{m^2} \ \bigg| \ A_T^+\right],
	\label{eq:Eo:thet}
	\end{align}
	where the equality labelled (a) above is obtained using \eqref{eq:Q}.
	Upon multiplying \eqref{eq:Eo_sum} by $\frac{1}{m^2}$, and letting $m \to \infty$, we obtain via \eqref{eq:t<k} and \eqref{eq:Eo:thet}:
	\begin{equation*}
	\lim_{m \to \infty} \E^{\mathrm{o}}  \left[\frac{R_{k,n}(\Gamma_m)}{m^2} \right]  = \sum_{t=k}^n \sum_{T \subseteq [n]: \atop |T| = t} \theta_{k,t}^+(p) \, \P^{\mathrm{o}}(A_T^+).
	\end{equation*}
		
	Applying Proposition~\ref{prop:formulap} completes the proof of the first part of the theorem. The second part of the theorem is a consequence of the proposition below.
	\end{IEEEproof}

\begin{prop}
	$$\sum_{t=k}^{n} \sum_{j=k}^{t}\binom{n}{t}\binom{t}{j}(\theta^+(p))^{t+j}(1-\theta^+(p))^{n-j} = \P(Y\ge k),$$
	where $Y\sim \mathrm{Bin}(n,(\theta^+(p))^2)$.
	\label{prop:simpexpr}
\end{prop}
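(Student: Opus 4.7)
My plan is to write $q := \theta^+(p)$ throughout and prove the stated identity either probabilistically or by a direct algebraic manipulation, both of which are short. The cleanest approach is probabilistic. I would construct $n$ independent pairs of Bernoulli random variables $(X_i,Z_i)_{i=1}^n$, with all $2n$ coordinates mutually independent and each $X_i, Z_i \sim \mathrm{Ber}(q)$. Defining $T := \sum_{i=1}^n X_i$ and $Y := \sum_{i=1}^n X_i Z_i$, the summands $X_iZ_i$ are i.i.d.\ $\mathrm{Ber}(q^2)$, so $Y \sim \mathrm{Bin}(n,q^2)$, matching the $Y$ in the proposition. On the other hand, $T \sim \mathrm{Bin}(n,q)$, and conditioned on $T=t$ and on which $t$ coordinates have $X_i=1$, the variable $Y$ is the sum of $t$ independent $\mathrm{Ber}(q)$ copies, so $Y \mid T=t \sim \mathrm{Bin}(t,q)$.

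The plan is then to expand
\begin{align*}
\P(Y\ge k) &= \sum_{t=0}^n \P(T=t)\,\P(Y\ge k\mid T=t) \\
           &= \sum_{t=k}^n \binom{n}{t}q^t(1-q)^{n-t}\sum_{j=k}^t \binom{t}{j}q^j(1-q)^{t-j},
\end{align*}
where the outer sum starts at $t=k$ since $\P(Y\ge k\mid T=t)=0$ for $t<k$. Combining the powers of $q$ and $(1-q)$ collapses $(1-q)^{n-t}(1-q)^{t-j}$ to $(1-q)^{n-j}$, yielding exactly the left-hand side of the proposition.

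As a sanity check I would also verify the identity purely algebraically by swapping the order of summation and using $\binom{n}{t}\binom{t}{j}=\binom{n}{j}\binom{n-j}{t-j}$:
\begin{align*}
\sum_{t=k}^{n}\sum_{j=k}^{t}\binom{n}{t}\binom{t}{j}q^{t+j}(1-q)^{n-j}
&=\sum_{j=k}^{n}\binom{n}{j}q^j(1-q)^{n-j}\sum_{t=j}^{n}\binom{n-j}{t-j}q^t \\
&=\sum_{j=k}^{n}\binom{n}{j}q^{2j}(1-q)^{n-j}(1+q)^{n-j} \\
&=\sum_{j=k}^{n}\binom{n}{j}q^{2j}(1-q^2)^{n-j},
\end{align*}
which is once again $\P(Y\ge k)$ for $Y\sim\mathrm{Bin}(n,q^2)$. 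There is no real obstacle here; the only thing to be careful about is getting the conditional distribution $Y\mid T=t$ right (or, in the algebraic proof, correctly recombining $(1-q)^{n-j}(1+q)^{n-j}=(1-q^2)^{n-j}$), and keeping the range of summation consistent with $j\le t\le n$ upon interchange.
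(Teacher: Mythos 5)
Your probabilistic argument is correct and is essentially the same as the paper's proof, which also constructs $Y=\sum_{i=1}^n X_iU_i$ with i.i.d.\ $\mathrm{Ber}(\theta^+(p))$ factors and conditions on $X=\sum_i X_i$ (your $T$); the only cosmetic difference is that the paper computes $\P(Y=j)$ and then exchanges the order of summation, whereas you condition directly in $\P(Y\ge k)$. Your supplementary algebraic verification via $\binom{n}{t}\binom{t}{j}=\binom{n}{j}\binom{n-j}{t-j}$ is also correct, though not needed.
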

\begin{IEEEproof}
Consider $Y = \sum_{i=1}^n X_iU_i$, where $X_i, U_i$, $i = 1,2\ldots,n$, are i.i.d.\ $\mathrm{Ber}(\theta^+(p))$ random variables. Clearly, each product $X_iU_i$ is $\mathrm{Ber}((\theta^+(p))^2)$, so that $Y\sim \mathrm{Bin}(n,(\theta^+(p))^2)$. 

Alternatively, $\P(Y = j) = \sum_{t=0}^n \P(Y = j \mid X = t) \P(X = t)$, with $X = \sum_{i=1}^n X_i$. Thus,
	\begin{align*}
        \P(Y=j) &= 
	\sum_{t=j}^{n}\binom{t}{j}(\theta^+(p))^j(1-\theta^+(p))^{t-j} \times \\
	& \hspace{2.4cm}\binom{n}{t}(\theta^+(p))^t(1-\theta^+(p))^{n-t}\\
	&= \sum_{t=j}^{n} \binom{n}{t} \binom{t}{j}(\theta^+(p))^{t+j}(1-\theta^+(p))^{n-j}.
	\end{align*}
Hence, 
$$
\P(Y\ge k) = \sum_{j=k}^{n} \sum_{t=j}^{n}\binom{n}{t}\binom{t}{j}(\theta^+(p))^{t+j}(1-\theta^+(p))^{n-j},
$$
from which, upon exchanging the order of the summations, we get the expression in the statement of the proposition.
\end{IEEEproof}

\section{Discussion}\label{sec:discussion}
In this section, we give justifications and heuristics for some of the assumptions made in our analysis.

\subsection{Super-critical region}
\label{sec:supercrit}
Our entire analysis for grids is based on the assumption that we operate in the super-critical region for the site-percolation process. We give an explanation for the same here. Recall that we want values of the forwarding probability $p$ for which the expected fraction of successful receivers,  $\E[\frac{1}{m^2} R_{k,n}(\Gamma_m)]$ is at least $1-\delta$, for some (small) $\delta > 0$. Hence, we need $\E[\frac{1}{m^2} |\cR_{k,n}(\Z^2) \cap \G_m| ]\ge 1-\delta$. If we would like this to hold for all sufficiently large $m$, then $p$ must be such that $\cR_{k,n}(\Z^2)$ has infinite cardinality. This implies, due to the correspondence between probabilistic forwarding and site percolation on $\Z^2$, that $p$ must be such that there exists an infinite (open/extended) cluster in the site percolation process. Thus, we must operate in the super-critical region $p > p_c$. It can also be seen from the simulation results in Figs.~\ref{fig:ergodic_pkndelta} and \ref{fig:ergodic_trans} that $\tau_{k,n,\delta}$ is minimized when $p_{k,n,\delta}$ is in the super-critical region. Further, from Fig. \ref{fig:ergodic_bounds}(a), which provides the minimum forwarding probability obtained numerically from  \eqref{eq:pkndelta_simp}, and which is used to generate the plots in Fig. \ref{fig:ergodic_bounds}(b), it is clear that the expected total number of transmissions is indeed minimized when operating in the super-critical region. We use these arguments as justification for considering only the $p > p_c$ case in our analysis.

\subsection{Insufficiently large $m$} \label{sec:sufflargem}
We now re-visit the disparity seen in Fig.~\ref{fig:ergodic_trans} between the $\tau_{k,n,\delta}$ curves (normalized by the grid size $m^2$) for $\G_{31}$ and $\Gamma_{501}$ obtained via simulations, and the corresponding curve for large grids $\G_m$ obtained via \eqref{eq:taukndelta}. As discussed previously, the numerical evaluation of the RHS of \eqref{eq:taukndelta} relies on the approximation to $p_{k,n,\delta}$ in \eqref{eq:pkndelta_simp}, which, for fixed $k$, $n$ and $\delta$, is valid only for sufficiently large $m$. In the regime where the approximation is not valid (as happens for $n \ge 130$ and $m = 501$ in Fig.~\ref{fig:ergodic_trans}), there is a small discrepancy between the true value of $p_{k,n,\delta}(\G_m)$ obtained via simulations, and the approximation in \eqref{eq:pkndelta_simp}. While this discrepancy is too small to be seen in the plots in Fig.~\ref{fig:ergodic_pkndelta}, it gets blown up when evaluating $\tau_{k,n,\delta}$ using the expression in  \eqref{eq:taukndelta}, which involves $\theta^+(p)$. This blow-up is attributable to the fact that $\theta^+(p)$ exhibits a sharp phase transition around $p=0.6$ (see Fig.~\ref{fig:theta}), so that small changes in $p$ near $0.6$ translate to large changes in $\theta^+(p)$. 

Interestingly, our simulations also indicate that for \emph{any} value of $m$, the true curve for $\frac{1}{m^2} \tau_{k,n,\delta}(\G_m)$ always lies on or above the curve for the ``large-$\G_m$ approximation'' obtained via  \eqref{eq:taukndelta} and \eqref{eq:pkndelta_simp}. We attempt an explanation for this here. We conjecture that the large-$m$ approximation in \eqref{eq:pkndelta_simp} is in fact an inequality valid for all $m$, at least when $\delta$ is small. 

\begin{conj}
Fix $\delta \in (0,1/8)$. Then, for any $k$, $n$ and $m$, we have
\begin{equation}
p_{k,n,\delta}(\Gamma_m) \ \ge \ \inf \{ p \ | \ Pr(Y\ge k) \ge 1-\delta\}, 
\label{ineq:conj}
\end{equation}
where $Y\sim \mathrm{Bin}(n,(\theta^+(p))^2)$. 
\label{conj:pkndelta}
\end{conj}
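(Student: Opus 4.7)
The plan is to derive the conjecture from a monotonicity-in-$m$ statement for the expected successful-receiver fraction, and to attempt that monotonicity via a coupling argument. Let $p^* := \inf\{p : \P(Y_p \geq k) \geq 1-\delta\}$ with $Y_p \sim \mathrm{Bin}(n,(\theta^+(p))^2)$. Since $\theta^+$ is continuous and non-decreasing on $(p_c,1]$, so is $p \mapsto \P(Y_p \geq k)$, with value exactly $1-\delta$ at $p=p^*$. A sufficient condition for $p_{k,n,\delta}(\Gamma_m) \geq p^*$ is therefore the pointwise bound $\mathbb{E}[R_{k,n}(\Gamma_m)/m^2] \leq \P(Y_p \geq k)$ for every relevant $p$: for $p < p^*$ this forces the expected fraction to lie below $1-\delta$. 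By Theorem~\ref{thm:rkn}, $\P(Y_p \geq k)$ is precisely the large-$m$ limit of $\phi(m) := \mathbb{E}[R_{k,n}(\Gamma_m)/m^2]$, so it would suffice to prove that $\phi(m)$ is non-decreasing along odd $m$, for each fixed $p > p_c$. The small-$\delta$ assumption ($\delta < 1/8$) enters through ensuring $p^* > p_c$, so that the supercritical machinery of Section~\ref{sec:grid} is available.

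For the monotonicity, I would couple the $n$ site-percolation processes on $\Gamma_{m+2}$ and on $\Gamma_m$ by sharing the Bernoulli states on $\Gamma_m$ and drawing independent fresh states for the annulus $A_m := \Gamma_{m+2} \setminus \Gamma_m$. Under this coupling every successful receiver in $\Gamma_m$ remains one in $\Gamma_{m+2}$, since enlarging the grid can only add open paths. Thus $R_{k,n}(\Gamma_{m+2}) = R_{k,n}(\Gamma_m) + \Delta_m$ with $\Delta_m \geq 0$ accounting for newly successful nodes in $A_m$ and for nodes of $\Gamma_m$ that now benefit from exterior paths. The inequality $\phi(m+2) \geq \phi(m)$ would then reduce to the annular density bound
\[
\mathbb{E}[\Delta_m] \ \geq \ \bigl((m+2)^2 - m^2\bigr)\,\phi(m) \ = \ (4m+4)\,\phi(m),
\]
i.e.\ the expected density of new successful receivers must match or exceed the density already achieved inside $\Gamma_m$.

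The main obstacle is proving this annular-density bound. Heuristically, in the supercritical regime the nodes in $A_m$ inherit the cluster structure already present in $\Gamma_m$ (together with exterior paths whose contribution is quantified by Lemma~\ref{lem:outpath}), so their expected success density should approach the infinite-volume value $\P(Y_p \geq k)$, which by the monotonicity one is trying to establish dominates $\phi(m)$. Making this rigorous requires controlling correlations between the cluster of the origin inside $\Gamma_m$ and the fresh randomness on $A_m$; FKG inequalities alone give only the wrong direction, and independence-at-infinity arguments are asymptotic and do not immediately transfer to thin annular strips. The most promising route appears to be a renormalization or block argument on the supercritical lattice, in the spirit of the constructions in \cite[Chapter~3]{grimmett}, combined with Proposition~\ref{prop:formulap} and Corollary~\ref{cor:outpath} to control the IEC-related contributions to $\Delta_m$. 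It is also plausible that the conjectured inequality ultimately holds only for $m$ beyond a threshold depending on $(k,n,\delta)$, in which case the plan above would need to be supplemented by a direct verification on small grids, making use of the fact that at such small $\delta$ the relevant $p$-range is pushed close to $1$, where high-density percolation estimates become tight.
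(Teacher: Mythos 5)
The statement you are addressing is labelled a \emph{conjecture} in the paper: the authors state explicitly that a rigorous proof eludes them and offer only a heuristic in its support. Your proposal reproduces that heuristic rather than closing it. Your reduction --- that \eqref{ineq:conj} follows once one shows $\phi(m) := \E\bigl[R_{k,n}(\G_m)/m^2\bigr] \le \lim_{m'\to\infty}\phi(m')$ for each supercritical $p$, and that this in turn would follow from monotonicity of $\phi$ in $m$ --- is exactly the reduction the paper makes in Section~\ref{sec:sufflargem}, where the monotonicity is supported only by simulation (Fig.~\ref{fig:recs_gridsize}) and by the same boundary-node intuition that underlies your coupling. You are right to isolate the annular-density bound $\E[\Delta_m] \ge (4m+4)\,\phi(m)$ as the genuine obstacle: the natural coupling gives $R_{k,n}(\G_{m+2}) \ge R_{k,n}(\G_m)$ pointwise, but since the normalizing volume grows as well, this says nothing about the fractions, and neither FKG (which, as you note, points the wrong way) nor a soft renormalization argument is known to supply the missing density estimate. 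So the proposal is a faithful reconstruction of the paper's supporting argument, not a proof, and it stalls at precisely the step the authors also leave open.

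Two smaller corrections to the set-up. First, the ``pointwise bound for every relevant $p$'' cannot hold literally for $p \le p_c$, where $\P(Y_p \ge k) = 0$ (for $k\ge 1$) while $\phi(m) \ge 1/m^2 > 0$ because the source is always a successful receiver; the subcritical range must be handled separately, e.g.\ via the standard coupling showing $\phi_p(m)$ is non-decreasing in $p$, so that $\phi_p(m) \le \phi_{p'}(m) < 1-\delta$ for any $p \le p_c < p' < p^*$. Second, the hypothesis $\delta < 1/8$ is not what guarantees $p^* \ge p_c$ --- that already follows from $\P(Y_p \ge k) = 0$ below criticality for any $\delta < 1$ --- so its role in your argument is overstated. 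Neither point affects the substance: the conjecture remains open, and your plan, like the paper's, reduces it to an unproven monotonicity-in-$m$ claim.
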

Thus, assuming the validity of the conjecture, the expected total number of transmissions, $\tau_{k,n,\delta}(\G_m)$, at a forwarding probability equal to $p_{k,n,\delta}(\G_m)$ is at least as large as that when the forwarding probability is set to be equal to the RHS of \eqref{eq:pkndelta_simp} (or \eqref{ineq:conj}). We next provide an argument in support of the conjecture.

\par Recall that 
$$p_{k,n,\delta}(\G_m) = \inf\left\{p \ \bigg| \ \E\left[\frac{R_{k,n}(\G_m)}{m^2}\right]\ge 1-\delta  \right\},$$
while the RHS of \eqref{ineq:conj} is, by virtue of Theorem~\ref{thm:rkn},
$$\inf\left\{p \ \bigg| \ \lim_{m\rightarrow \infty} \E\left[\frac{R_{k,n}(\G_m)}{m^2}\right] \ge 1-\delta  \right\}.$$
Thus, it would suffice to show that when $p$ is large enough to ensure that $\E\left[\frac{R_{k,n}(\G_m)}{m^2}\right]  \ge 1-\delta$, we also have $\lim_{m \to \infty} \E\left[\frac{R_{k,n}(\G_m)}{m^2}\right] \ge \E\left[\frac{R_{k,n}(\G_m)}{m^2}\right]$. This seems to be true: simulation results (see Fig.~\ref{fig:recs_gridsize}) in fact indicate that, for fixed $k$ and $n$, and $p$ sufficiently above criticality, $\E\left[\frac{R_{k,n}(\G_m)}{m^2}\right]$ is an increasing function of $m$.

\begin{figure}
	\centering
	\includegraphics[width=0.5\textwidth]{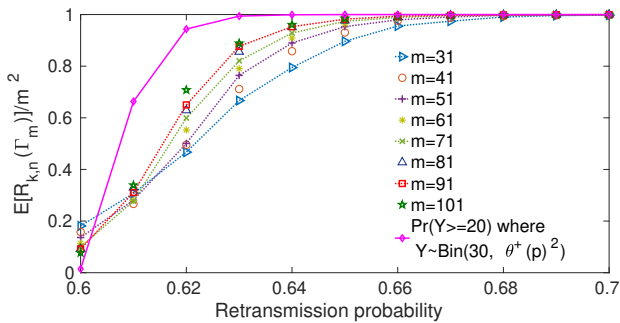}
	\caption{Plot of the expected fraction of nodes that receive at least  $k=20$ out of $n=30$ packets in a $501 \times 501$ grid. Expectation over $100$ iterations.}
	\label{fig:recs_gridsize}
	\vspace*{-1em}
\end{figure}%

\par The intuition behind the increasing nature of the fraction of receivers can be illustrated via the case of $k=1$ and $n=1$. Consider a node $v$ on the boundary of $\G_m$ which receives the sole packet from outside $\G_m$. Let us further suppose that the path through which it receives the packet is contained within $\G_{m+l}$ for some small $l>0$. Node $v$ is not a successful receiver in $\G_m$ but it is successful in $\G_{m+l}$. Additionally, nodes in the $\G_{m+l}$-conduit of $v$ (and the neighbours of these nodes) that are not successful receivers in $\G_m$ become successful receivers in $\G_{m+l}$. Moreover, if node $v$ transmits the packet, there are additional nodes in the interior of $\G_m$ that receive the packet. So, increasing the grid size from $m$ to $m+l$ not only leads to an increase in the number of receivers on the boundary but also results in additional receivers in the bulk. This suggests that the expected number of receivers in $\G_m$ increases in chunks of $m^2$ rather than just $m$. Unfortunately, a rigorous proof of this fact eludes us.

\subsection{ Other graphs}

The analysis on the grid can be extended to other network topologies as well. The ergodic theorems which are detailed in Section \ref{sec:ergthms} of the appendix constitute a key ingredient of our proofs. Similar ergodic theorems are available for other lattice structures as well, like the triangular and hexagonal lattices etc.; 
we refer the reader to \cite{Durrett} and \cite{krengel1985} for further reading on this topic. Our analysis extends to these lattice structures, and we expect finite subgraphs of these lattices to exhibit behaviour similar to that of the grid.

\subsection{ Communication aspects}

For the purpose of analysis, it might be easier to think of each of the packet transmissions happening one after the other in the network. In this scenario, packet collisions are avoided. In a practical implementation, however, it might be that different packets are transmitted on different sub-carriers of an OFDM signal so that interference effects are minimized. Thus, a node could possibly receive different packets from each of its neighbours without any collisions.

\subsection{Algorithm variants}

Several variants of the probabilistic forwarding with coded packets algorithm could be set up and analyzed. For example, the forwarding probability at a node could be a function of its distance from the origin. Alternatively, a node could use more sophisticated means of deciding which received packets it should forward. However, these algorithms require either greater knowledge of the network topology, or they demand additional resources such as buffers or computation capability at the individual nodes. This does not align with our idea of a completely distributed, energy-efficient broadcast algorithm. However, a certain light-weight extension is possible for our model: a node on receiving $k$ out of the $n$ coded packets, can decode the data and subsequently behave as a source for generating additional coded packets which are broadcast. One can reduce the forwarding probability of these secondary sources, and further stipulate that, only those nodes which receive exactly $k$ packets encode and forward packets. Naturally, this is a harder problem to analyze, and we believe that the analysis in this paper will prove to be a stepping stone in understanding such algorithms. 
\par Another minor variant is to ask for $p_{k,n,\delta}$ to be the minimum probability such that the fraction of successful receivers is close to $1$ with a high probability. Simulations using this criterion indicate similar trends for $p_{k,n,\delta}$ and $\tau_{k,n,\delta}$ as in the results presented here.

	
		\section*{Acknowledgements} The research presented in this paper was supported in part by a fellowship from the Centre for Networked Intelligence (a Cisco CSR initiative) of the Indian Institute of Science to the first author, and in part by the DRDO-IISc ``Frontiers'' research programme.
	
	\bibliographystyle{IEEEtran}
	\bibliography{references}

	
	
	\appendix
	
\subsection{Bounds for the CDF of a Binomial random variable}
\label{apdx:cdfbinom}
The following theorem from \cite{zubkov2013complete} gives tight bounds on the CDF of a binomial random variable in terms of the standard normal CDF. 
\begin{thm}[\cite{zubkov2013complete}, Theorem~1]
	Let $0\le x,p \le 1$ and define $D\left(x\ || \ p\right):=x \ln \frac{x}{p}+(1-x) \ln \frac{1-x}{1-p}, \ \mathrm{sgn}(x):=\frac{x}{|x|}$ for $x\neq 0$, and $\mathrm{sgn}(0):=0$. Let $\{C_{n,p}(k)\}_{k=0}^n$ be defined as follows:
	$$C_{n,p}(0)=(1-p)^n, \ C_{n,p}(n)=1-p^n,$$
	$$C_{n,p}(k)=\Phi \left(\mathrm{sgn}\left(\frac{k}{n}-p\right)\sqrt{2nD\biggl(\frac{k}{n} \left|\right|  p\biggr)}\right), \ 1\le k<n.$$
	For a binomial random variable $X\sim \mathrm{Bin}(n,p)$, for every $k = 0, 1, . . . , n-1$, and for every $p \in (0, 1)$,
	$$C_{n,p}(k)\le \P(X\le k) \le C_{n,p}(k+1).$$
Equalities hold for $k = 0$ and $k = n-1$ only.
	\label{bin_normal_cdf}
\end{thm}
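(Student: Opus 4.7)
The plan is to prove this via the classical integral (incomplete beta) representation of the binomial CDF, combined with a judicious change of variable that converts the integrand into a Gaussian-like form; this is the strategy of Zubkov and Serov.

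I would start from the identity
\[
\P(X \le k) \ = \ n\binom{n-1}{k}\int_p^1 q^k(1-q)^{n-k-1}\,dq,
\]
which follows by differentiating $\P(X_q \le k)$ in the parameter $q$ (giving $-n\binom{n-1}{k}q^k(1-q)^{n-k-1}$) and integrating from $p$ to $1$, using $\P(X_1 \le k) = 0$ for $k < n$. The next step is to recognize that the integrand factors as $\exp(-nD(k/n \| q))$ times an explicit polynomial prefactor (since $nD(k/n\|q)$ is convex in $q$ with minimum $0$ at $q = k/n$), and then to perform the substitution
\[
u \ = \ \mathrm{sgn}\!\Big(q - \tfrac{k}{n}\Big)\sqrt{2nD(k/n \| q)}.
\]
One checks that $u$ is smooth and strictly increasing in $q$, with $q = k/n$ corresponding to $u = 0$, $q = p$ to $u = -u_k$ where $u_k := \mathrm{sgn}(k/n - p)\sqrt{2nD(k/n\|p)}$ is the argument of $\Phi$ in the definition of $C_{n,p}(k)$, and $q = 1$ to $u = +\infty$.

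After the substitution, the integral takes the shape
\[
\P(X \le k) \ = \ \int_{-u_k}^{\infty} \frac{1}{\sqrt{2\pi}}\, e^{-u^2/2}\,\rho(u)\,du,
\]
where $\rho$ absorbs the Jacobian $dq/du$ along with the polynomial prefactor and the combinatorial normalisation $n\binom{n-1}{k}$. The crux is then to show that $\rho(u)$ is monotone in $u$ on the integration interval, and to pin down its endpoint values. A direct computation of $dq/du$ (which near $u = 0$ equals $\sqrt{(k/n)(1-k/n)/n}$) together with a Stirling-type asymptotic for $n\binom{n-1}{k}$ yields the required monotonicity and identifies constants that, upon comparing the representation at $k$ with the analogous representation at $k+1$, force the sandwich
\[
C_{n,p}(k) \ \le \ \P(X \le k) \ \le \ C_{n,p}(k+1)
\]
via the identity $\int_{-u_k}^{\infty}\tfrac{1}{\sqrt{2\pi}}\,e^{-u^2/2}\,du = \Phi(u_k)$ and its analogue at $k+1$. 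The boundary cases $k = 0$ and $k = n-1$ reduce to direct verifications: $\P(X=0) = (1-p)^n = C_{n,p}(0)$ and $\P(X \le n-1) = 1-p^n = C_{n,p}(n)$, which is precisely why those inequalities degenerate to equalities.

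The main obstacle will be establishing the monotonicity of $\rho(u)$ with sharp enough constants. The substitution is only $C^1$ at $u = 0$, and the Stirling correction terms must be tracked with enough precision to recover exactly $C_{n,p}(k)$ and $C_{n,p}(k+1)$; loose estimates on the Jacobian would produce Gaussian-tail bounds of the right qualitative shape but would miss the sharpness that makes the inequalities tight at the endpoints $k = 0$ and $k = n-1$.
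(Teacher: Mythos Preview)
The paper does not prove this theorem at all: it is quoted from \cite{zubkov2013complete} as an external result and used as a black box, with no proof given in the appendix. So there is nothing in the paper to compare your attempt against.

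That said, your sketch does follow the Zubkov--Serov strategy faithfully. The incomplete-beta representation and the substitution $u = \mathrm{sgn}(q - k/n)\sqrt{2nD(k/n\|q)}$ are exactly the right moves, and the reduction to a weighted Gaussian integral with weight $\rho(u)$ is correct. Where your outline remains genuinely incomplete is the part you yourself flag: the monotonicity of $\rho$ and the identification of the endpoint bounds as precisely $C_{n,p}(k)$ and $C_{n,p}(k+1)$. In the actual Zubkov--Serov argument this is not handled via Stirling corrections as you suggest, but rather by showing directly that the function $q \mapsto \Phi(u(q))$ has a derivative that compares in the right way with the derivative of $q \mapsto \P(X_q \le k)$, which sidesteps the need to track constants through an asymptotic expansion. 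Your ``Stirling with enough precision'' route would likely only recover the bounds up to $o(1)$ factors rather than the exact sandwich, so if you were to write this out in full you would want to switch to a direct derivative comparison instead.
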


\subsection{Ergodic theorems} \label{sec:ergthms}
	Let $\mathsf{A}$ be a finite alphabet, and $\nu$ a probability measure on it. Consider the probability space $(\Omega,\mathcal{F},\P)$, where $\Omega = \mathsf{A}^{\Z^2}$, $\mathcal{F}$ is the $\sigma$-algebra of cylinder sets, and $\P$ is the product measure $\otimes_{u} \nu_u$ with $\nu_u = \nu$ for all $u \in Z^2$. For $z \in \Z^2$, define the shift operator $T_z: \Omega \to \Omega$ that maps $\omega = {(\omega_u)}_{u \in \Z^2}$ to $T_z\omega$ such that $(T_z\omega)_u = \omega_{u-z}$ for all $u \in \Z^2$. Correspondingly, for a random variable $X$ defined on this probability space, set $T_zX := X \circ T_{-z}$, i.e., $(T_zX)(\omega) = X(T_{-z}\omega)$ for all $\omega \in \Omega$.
	
	The following theorem is a special case of Tempelman's pointwise ergodic theorem (see e.g., \cite[Chapter~6]{krengel1985}). For $\mathsf{A} = \{0,1\}$, this was stated as Proposition~8 in \cite{newman1981infinite}. 
	
\begin{thm}
For any random variable $X$ on $(\Omega,\mathcal{F},\P)$ with finite mean, we have 
$$
\lim_{m \to \infty} \frac{1}{m^2} \sum_{z \in \G_m} T_zX \, = \, \E[X]  \ \ \ \ \ \ \P\text{-a.s.},
$$
where $\G_m := [-\frac{m-1}{2},\frac{m-1}{2}]^2 \cap \Z^2$ is the $m \times m$ grid ($m$ odd).
\label{thm:tempelman}	
\end{thm}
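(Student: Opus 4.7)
The plan is to recognize this as a direct instance of the multidimensional pointwise ergodic theorem for $\Z^2$-actions and verify its three hypotheses: the $\Z^2$-action by shifts $\{T_z\}$ preserves $\P$, the action is ergodic, and the averaging sets $\G_m$ form a regular (Følner, tempered) sequence. Once these are in place, the cited Tempelman theorem delivers the conclusion, with the limit identified as $\E[X]$ via ergodicity rather than as a conditional expectation on the invariant $\sigma$-algebra.

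Measure-preservation is immediate from the product structure: since $\P = \bigotimes_{u\in\Z^2}\nu$ has identical marginals, the shift $T_z$ merely relabels i.i.d.\ coordinates, so $\P \circ T_z^{-1} = \P$ for every $z \in \Z^2$; it suffices to check this on cylinder events, where it is a finite product rearrangement. Ergodicity follows from Kolmogorov's $0$-$1$ law: fix a shift-invariant event $A \in \mathcal{F}$, approximate it in $L^1(\P)$ by a cylinder event $B$ depending only on coordinates in some finite box $F$; for any $z$ with $(F+z)\cap F = \emptyset$, the invariance $A = T_zA$ and independence of disjoint coordinates give $\P(A \cap B) \approx \P(A \cap T_zB) = \P(A)\P(B)$. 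Letting the approximation tighten yields $\P(A) = \P(A)^2$, so $\P(A) \in \{0,1\}$. Regularity of $\{\G_m\}$ is equally routine: for any fixed $z \in \Z^2$, $|\G_m \triangle (\G_m + z)| = O(m|z|) = o(m^2) = o(|\G_m|)$, so $\{\G_m\}$ is Følner; monotonicity together with $|\G_{m+1}|/|\G_m| \to 1$ makes the sequence tempered in the sense required by Tempelman.

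With the hypotheses verified, Tempelman's pointwise ergodic theorem gives, for any $X \in L^1(\P)$,
$$
\lim_{m \to \infty} \frac{1}{|\G_m|}\sum_{z \in \G_m} T_z X \;=\; \E[X \mid \mathcal{I}] \quad \P\text{-a.s.},
$$
where $\mathcal{I}$ is the $\sigma$-algebra of shift-invariant events. The ergodicity established above forces $\mathcal{I}$ to be $\P$-trivial, so $\E[X \mid \mathcal{I}] = \E[X]$ almost surely. Since $|\G_m| = m^2$, this is exactly the claimed limit.

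The main obstacle, were one to prove the theorem from scratch rather than invoke Tempelman, would be the a.s.\ (as opposed to $L^1$) convergence: the mean ergodic theorem for $\Z^2$-actions is a short Hilbert-space argument, but pointwise convergence of cube averages requires a multidimensional Wiener-type maximal inequality, which is precisely the substantive content packaged into Tempelman's theorem. By citing that theorem as a black box, the work reduces to the three routine verifications above, the only subtle one being the application of Kolmogorov's $0$-$1$ law to promote shift-invariance to tail-measurability.
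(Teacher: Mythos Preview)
Your proposal is correct and matches the paper's approach exactly: the paper does not prove this theorem at all but simply cites it as a special case of Tempelman's pointwise ergodic theorem (referring to \cite[Chapter~6]{krengel1985} and \cite[Proposition~8]{newman1981infinite}). You go further than the paper by explicitly checking measure-preservation, ergodicity via Kolmogorov's $0$--$1$ law, and the F{\o}lner/regularity properties of the cubes $\G_m$, which is more than the paper itself supplies; the one minor quibble is that Tempelman's original condition is the doubling bound $|\G_m - \G_m| \le C|\G_m|$ (trivially satisfied here with $C=4$) rather than the ratio condition $|\G_{m+1}|/|\G_m|\to 1$ you state, but this does not affect correctness.
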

	
	The theorem applies to the case of site percolation, in which $\nu$ above is the Bernoulli($p$) measure on $\mathsf{A} = \{0,1\}$. Applying the theorem with $X = {\mathds{1}}_{\{\0 \in C\}}$, the indicator function of $\0$ being in the (unique when $p > p_c$) IOC $C$, and again with $X = {\mathds{1}}_{\{\0 \in C^+\}}$, we obtain Theorem~\ref{thm:theta}.
	
%
	Next, with $\mathsf{A} = \{0,1\}^n$ and $\nu$ the product of $n$ independent Bernoulli($p$) measures, we are in the setting of $n$ independent site percolations on $\Z^2$. Let $C_{k,n}^+$ be the set of sites that are in the IEC in at least $k$ out of the $n$ percolations. In this case, taking $E$ to be the event that the $\0$ is in the IEC in at least $k$ of the $n$ independent percolations and $X=\mathds{1}_E$, and applying Theorem~\ref{thm:tempelman}, we obtain
	$$\lim_{m \rightarrow \infty}\frac{1}{m^2} |C_{k,n}^+ \cap \G_m| = \P(E) \ \ \ \ \ \ \P\text{-a.s.}$$
	Using the fact that the origin is in the IEC with probability $\theta^+(p)$, and since all the $n$ percolations are independent, the probability on the RHS in the above equation can be evaluated to obtain Theorem~\ref{thm:prodtheta}.

\end{document}